\pdfoutput=1
\documentclass[10pt]{article}
\usepackage[preprint]{tmlr}
\usepackage{microtype}
\usepackage{amsmath,amssymb,amsthm,mathtools}
\usepackage{graphicx,booktabs,multirow,longtable}
\usepackage{algorithm,algpseudocode}
\usepackage{xcolor}
\usepackage{placeins}
\usepackage{float}
\usepackage{caption}
\usepackage[colorlinks=true,linkcolor=blue!50!black,citecolor=blue!50!black,urlcolor=blue!50!black]{hyperref}
\usepackage{url}
\usepackage[capitalize,nameinlink]{cleveref}

\newtheorem{theorem}{Theorem}
\newtheorem{proposition}[theorem]{Proposition}
\newtheorem{corollary}[theorem]{Corollary}

\theoremstyle{definition}

\theoremstyle{remark}
\newtheorem{remark}[theorem]{Remark}
\newtheorem{observation}[theorem]{Observation}

\newcommand{\aeos}{a_{\mathrm{EoS}}}
\newcommand{\Reff}{R_{\mathrm{eff}}}
\newcommand{\R}{\mathbb{R}}
\newcommand{\VRLyapA}{5.32}

\newcommand{\SelfStabSlopeQ}{1.92}
\newcommand{\SelfStabSlopeR}{1.96}
\newcommand{\SelfStabSlopeV}{2.00}
\newcommand{\SelfStabVRFinal}{4.79}
\newcommand{\NetVREight}{$3.7\times10^{-5}$}

\newcommand{\NetQuartEight}{$1.5\times10^{-4}$}
\newcommand{\NetQuartEightC}{0.005}

\newcommand{\QuadParams}{7{,}850}
\newcommand{\QuadLamMax}{300.4}
\newcommand{\QuadErrQuartic}{$1.7\times10^{-9}$}
\newcommand{\QuadErrCubic}{$3.2\times10^{-8}$}
\newcommand{\QuadQuarticOn}{39/39}
\newcommand{\QuadCubicOn}{39/39}
\newcommand{\QuadNoneDiv}{39/39}
\newcommand{\QuadRelOn}{0/24}
\newcommand{\QuadRelPfour}{17}
\newcommand{\QuadVROnStable}{12/15}
\newcommand{\QuadVRAltHigh}{0.11}
\newcommand{\QuadSubBad}{1/18}

\newcommand{\BasinNineHiMin}{0.98}
\newcommand{\BasinNineLoMax}{0.32}
\newcommand{\BasinQuarticNineNine}{0.76}
\newcommand{\BasinQuarticEight}{0.98}
\newcommand{\BasinPQuarterNine}{0.48}
\newcommand{\BasinPQuarterNineNine}{0.01}
\newcommand{\BasinRelNine}{0.27}

\newcommand{\BasinNZero}{300}
\newcommand{\LargeICCubicMin}{0.5}
\newcommand{\LargeICCubicMax}{4.4}

\newcommand{\EMlpRuns}{450}
\newcommand{\EMlpCvxN}{178}
\newcommand{\EMlpCvxMed}{1.006}
\newcommand{\EMlpCvxLo}{1.000}
\newcommand{\EMlpCvxHi}{1.018}
\newcommand{\EMlpCvxWithinTen}{96}

\newcommand{\EMlpVRMed}{0.64}
\newcommand{\EMlpVRLo}{0.34}
\newcommand{\EMlpVRHi}{0.87}
\newcommand{\EMlpGatedAbsDiv}{0/200}
\newcommand{\EMlpMomDivFour}{20/20}
\newcommand{\EMlpMomDivTwo}{1/10}
\newcommand{\EMlpVRSmallRlast}{4.7}
\newcommand{\EMlpVRSmallLossRatio}{316}
\newcommand{\EMlpQSmallLossRatio}{13}
\newcommand{\EMlpVRRelTopMin}{7}
\newcommand{\EMlpVRRelTopMax}{41}

\newcommand{\EMlpLossZero}{0.6}
\newcommand{\EMlpDivHighQuarticRel}{12/20}
\newcommand{\EMlpDivHighCubicRel}{19/20}
\newcommand{\EMlpDivHighRelativisticRel}{6/20}

\newcommand{\EMlpCvxAbsPtsWithinTen}{93}

\newcommand{\EMlpCvxRelPtsWithinTen}{75}

\newcommand{\EMlpQuarticAbsEndWithin}{40/40}

\newcommand{\EMlpQuarticAbsTopLoss}{$3.1{\times}10^{-6}$}
\newcommand{\EMlpCubicAbsReffMed}{1.003}
\newcommand{\EMlpCubicAbsReffLo}{1.000}
\newcommand{\EMlpCubicAbsReffHi}{1.006}
\newcommand{\EMlpCubicAbsEndWithin}{0/40}
\newcommand{\EMlpCubicAbsEndLo}{1.01}
\newcommand{\EMlpCubicAbsEndHi}{3.32}
\newcommand{\EMlpCubicAbsTopLoss}{0.20}

\newcommand{\EMlpRelativisticAbsEndWithin}{40/40}

\newcommand{\EMlpRelativisticAbsTopLoss}{$1.9{\times}10^{-7}$}

\newcommand{\EThKinVsMuonAcc}{+0.17}
\newcommand{\EThKinVsMuonAccSd}{0.13}
\newcommand{\EThKinVsMuonAccP}{0.04}

\newcommand{\EThKinVsMuonCE}{-0.016}
\newcommand{\EThKinVsMuonCESd}{0.011}
\newcommand{\EThKinVsMuonCEP}{0.03}

\newcommand{\EThKinVsEffAcc}{+0.10}
\newcommand{\EThKinVsEffAccSd}{0.20}
\newcommand{\EThKinVsEffAccP}{0.33}
\newcommand{\EThKinVsEffAccLo}{-0.15}
\newcommand{\EThKinVsEffAccHi}{+0.34}
\newcommand{\EThKinVsEffCE}{-0.009}
\newcommand{\EThKinVsEffCESd}{0.015}
\newcommand{\EThKinVsEffCEP}{0.28}
\newcommand{\EThKinVsEffCELo}{-0.028}
\newcommand{\EThKinVsEffCEHi}{+0.010}

\newcommand{\EThEffVsMuonCE}{-0.008}

\newcommand{\EThEffVsMuonCEP}{0.07}

\newcommand{\EThKqVsThreeAcc}{+0.05}

\newcommand{\EThKqVsThreeCE}{+0.008}

\newcommand{\EThKqVsThreeMinP}{0.17}
\newcommand{\EThKinGate}{0.70}

\newcommand{\EThSeeds}{5}

\newcommand{\DelayPredErr}{24}
\newcommand{\DelayPredErrSlow}{3}

\newcommand{\DelayMlpPeakObs}{1.00}
\newcommand{\DelayMlpReffCross}{0.55}
\newcommand{\DelayMlpRmax}{1.75}

\newcommand{\DelayResN}{15}
\newcommand{\DelayResSpearman}{0.82}
\newcommand{\DelayResSpearmanP}{2\times10^{-4}}
\newcommand{\DelayResFastN}{9}
\newcommand{\DelayResSlowN}{6}
\newcommand{\DelayResFastLo}{2.0}
\newcommand{\DelayResFastHi}{3.6}
\newcommand{\DelayResSlowHi}{1.17}
\newcommand{\DelayResAboveBand}{11}

\newcommand{\DelayMlpMomPeakHi}{1.06}
\newcommand{\DelayMlpMomRhoHi}{0.010}

\newcommand{\EResRuns}{135}
\newcommand{\EResCvxN}{37}
\newcommand{\EResCvxMed}{1.363}
\newcommand{\EResCvxLo}{1.215}
\newcommand{\EResCvxHi}{1.578}

\newcommand{\EResAbsHighN}{40}

\newcommand{\EResRelHighN}{40}
\newcommand{\EResRelHighTrained}{2}
\newcommand{\EResRelHighCollapsed}{28}
\newcommand{\EResRelHighDiv}{9}
\newcommand{\EResMomHighN}{10}

\newcommand{\EResMomHighCollapsed}{5}
\newcommand{\EResMomHighDiv}{5}
\newcommand{\EResAbsHighMaxLoss}{$2.7{\times}10^{-4}$}
\newcommand{\EResMomCollLo}{0.060}
\newcommand{\EResMomCollHi}{0.090}
\newcommand{\EResMomCollRminLo}{0.010}
\newcommand{\EResMomCollRminHi}{0.10}

\newcommand{\ETwoRuns}{24}

\newcommand{\ETwoRlateLo}{0.91}
\newcommand{\ETwoRlateHi}{1.00}

\newcommand{\ETwoMgateLateMin}{0.99}
\newcommand{\ETwoVrLateMin}{1.00}
\newcommand{\ETwoQgateLateMin}{0.61}
\newcommand{\ETwoMgateEarlyMax}{0.15}
\newcommand{\ETwoVrEarlyMin}{0.10}
\newcommand{\ETwoVrEarlyMax}{0.60}
\newcommand{\ETwoMgateEarlyMin}{0.12}
\newcommand{\ETwoQgateEarlyMin}{0.45}
\newcommand{\EFourRuns}{16}
\newcommand{\EFourRLo}{0.16}
\newcommand{\EFourRHi}{0.27}
\newcommand{\EFourLegLo}{0.87}
\newcommand{\EFourLegHi}{1.00}
\newcommand{\EFourBsLo}{0.073}
\newcommand{\EFourBsHi}{0.112}
\newcommand{\EFourEtaBsLo}{2.8}
\newcommand{\EFourEtaBsHi}{4.3}
\newcommand{\EFourLamLo}{-0.11}
\newcommand{\EFourLamHi}{-0.06}

\newcommand{\EFiveAdamwBest}{3.714}

\newcommand{\EFiveClipBest}{3.716}

\newcommand{\EFiveVrBest}{3.705}

\newcommand{\EFiveKinOneBest}{3.737}

\newcommand{\EFiveKinFourBest}{3.706}
\newcommand{\EFiveAdamwTen}{6.47 and 6.69}
\newcommand{\EFiveClipTen}{3.79 and 4.07}
\newcommand{\EFiveVrTen}{5.99 and 6.09}
\newcommand{\EFiveVrGateThree}{1.00}
\newcommand{\EFiveVrGateTen}{0.19--0.25}
\newcommand{\EFiveKinOneTen}{3.96 and 4.84}
\newcommand{\EFiveKinOneGateThree}{0.54}

\newcommand{\EFiveKinFourGateThree}{0.78}

\newcommand{\EFiveAdamwSeedSpread}{0.026}
\newcommand{\EFiveBestSpread}{0.032}
\newcommand{\EFiveParams}{51.5}
\newcommand{\EFiveTokens}{393}
\newcommand{\EFiveMinutes}{16}
\newcommand{\EFiveRuns}{40}

\newcommand{\ESixWdOneN}{15}
\newcommand{\ESixGrokLo}{11.4}
\newcommand{\ESixGrokHi}{21.9}
\newcommand{\ESixSpearman}{0.22}
\newcommand{\ESixSpearmanP}{0.43}
\newcommand{\ESixSpearmanPre}{0.50}
\newcommand{\ESixSpearmanPreP}{0.06}
\newcommand{\ESixPreN}{15}

\newcommand{\ESixWdZeroAccHi}{0.27}
\newcommand{\ESixAdamwSpikeOne}{0.57}
\newcommand{\ESixAdamwGrok}{14.6}

\newcommand{\ESixClipSpikeOne}{0.14}
\newcommand{\ESixClipGrok}{18.4}

\newcommand{\ESixKinOneSpikeOne}{0.13}
\newcommand{\ESixKinOneGrok}{15.6}

\newcommand{\RampKickRuns}{84}
\newcommand{\RampKickValid}{81}
\newcommand{\RampKickExcluded}{3}

\newcommand{\RampKickSeriesErr}{0.041}
\newcommand{\RampKickSeriesErrMax}{25}
\newcommand{\RampKickSeriesErrMaxFast}{2}
\newcommand{\RampOrdersDown}{8}

\newcommand{\RampExitDeep}{1.115}
\newcommand{\RampExitShallow}{1.020}
\newcommand{\RampRinDeep}{0.1}
\newcommand{\RampRinShallow}{0.9}
\newcommand{\RampPsiDeep}{1.118}
\newcommand{\RampPsiShallow}{1.029}

\newcommand{\RampPsiDiffThree}{0.010}
\newcommand{\RampPsiDiffOne}{0.004}

\newcommand{\RampIdealDiff}{0.009}

\newcommand{\RampLamLo}{206}
\newcommand{\RampLamHi}{211}
\newcommand{\RampLossLo}{0.015}
\newcommand{\RampLossHi}{0.017}
\newcommand{\RampLamDrift}{9.5}
\newcommand{\RampNoiseRuns}{45}
\newcommand{\RampNoiseInBand}{43}

\newcommand{\RampKnRuns}{45}
\newcommand{\RampKnInBand}{42}

\newcommand{\RampKnLoDeep}{1.090}
\newcommand{\RampKnLoShallow}{1.020}

\newcommand{\RampKnMidDeepLo}{1.05}
\newcommand{\RampKnMidDeepHi}{1.06}

\newcommand{\RampLamBelow}{-0.0527}
\newcommand{\RampLamA}{0.15}
\newcommand{\RampLamB}{0.57}

\newcommand{\ESensCvxLo}{1.004}
\newcommand{\ESensCvxHi}{1.009}
\newcommand{\ESensVrLo}{0.62}
\newcommand{\ESensVrHi}{0.71}
\newcommand{\ESensCvxNLo}{114}
\newcommand{\ESensCvxNHi}{213}

\newcommand{\RampTestErrA}{25}
\newcommand{\RampTestErrB}{8.7}
\newcommand{\RampTestErrC}{1.9}

\title{Physics-Inspired Optimizers beyond the Edge of Stability:\\
Period-Two Orbits and Delayed Instability}
\author{\name Alexander Itin\thanks{E-mail: \texttt{alx.itin2@gmail.com}; website: \href{https://ai4science.de}{AI4science.de}} \\
\addr Radboud University, Nijmegen, The Netherlands \\
\addr Independent researcher, Hamburg, Germany}

\begin{document}
\maketitle
\begingroup\renewcommand{\thefootnote}{}\footnotetext{The author used large language model assistants (Claude Opus 5.5, Anthropic; GPT-6 Astra Pro, OpenAI) for programming, running and analysing numerical experiments, reviewing drafts, and editing the text. The research questions, the theoretical framework and the interpretation of the results are the author's own; the author has checked the derivations, code and reported numbers and takes full responsibility for the content.}\endgroup

\begin{abstract}
Gradient descent with momentum can be viewed as a particle that moves with friction in the loss landscape. In deep learning the learning rate is usually large, so training is a discrete dynamical system, not a smooth flow. In many full-batch experiments the sharpness of the loss grows to the stability limit of the optimizer, the edge of stability (EoS), and training continues there, oscillating along the sharpest direction. Physics-inspired optimizers change this motion: Velocity-Regularized Adam (VRAdam), for example, shrinks the step at high speed, motivated by a quartic kinetic energy. We study such methods as \emph{kinetic maps}, in which the step is the gradient of a kinetic energy of the momentum, and ask what they do beyond the EoS. Our main findings are as follows. (i)~In a reduced model, a kinetic map creates a symmetric period-two orbit beyond the edge, where plain momentum diverges. This orbit is strictly linearly stable if and only if the elasticity of the map at its amplitude (the log-log slope of step length versus momentum) lies between 0 and 1; on it, the step reduction exactly compensates the excess sharpness, giving an effective edge of stability. VRAdam's damper fails this test above twice the critical learning rate, where its kinetic energy is not convex. The Legendre-dual map of VRAdam's own Lagrangian is convex and passes the test at every learning rate. (ii)~Our central dynamical finding is that, because the sharpness rises gradually, the loss of stability is delayed, as in Neishtadt's theory of slowly swept bifurcations. In a network with a controlled sweep, the measured dynamics match this theory without fitted parameters, including the memory of where the sweep started and its erasure by noise. (iii)~Local stability does not guarantee that training reaches the orbit. With momentum, a trajectory can instead end on another attractor, such as a period-four cycle or chaos, unless the map keeps enough elasticity at large momentum; the amount needed grows with the momentum coefficient $\beta_1$. Maps that bound the step, such as the relativistic one, already fail at $\beta_1=0.9$. The cubic-Legendre map (elasticity tending to $1/2$) reached the orbit from every sampled start up to $\beta_1=0.99$, at the price of an oscillation that grows faster with the learning rate. Experiments support the theory on an exact quadratic problem and in a full-batch MLP, where convex maps hold training near the effective edge while VRAdam's damper gets stuck at a strongly reduced step. In ResNet-32, maps that act early prevent collapse at large learning rates. On ResNet-18, however, much of the apparent benefit of the maps is reproduced by matching the effective learning rate. Code and data will be made publicly available.
\end{abstract}

\section{Introduction}
Gradient descent is one of the oldest algorithms of applied mathematics: Cauchy proposed it in 1847 for astronomical computations \citep{cauchy1847}. A century later Polyak added momentum, the \emph{heavy ball}, by viewing the iteration as a particle that moves with friction in the potential given by the loss \citep{polyak1964heavyball}, and Nesterov's acceleration followed \citep{nesterov1983}. This mechanical picture has since grown into a theory of its own. For small steps, accelerated methods are discretizations of damped second-order ODEs \citep{su2016ode} and of Bregman--Lagrangian flows \citep{wibisono2016variational}, and symplectic integrators discretize such flows \citep{betancourt2018symplectic}. Hamiltonian descent chooses the kinetic energy $K$ to suit the loss \citep{maddison2018hamiltonian}, relativistic kinetic energies bound the speed \citep{franca2020conformal}, and Lion, Adam and heavy ball can be written as Hamiltonian dynamics with descent and a general kinetic energy \citep{chen2024lionk,liang2026cautious}. Stochastic gradients add a picture based on modified equations and diffusion \citep{li2017sme}.

Deep networks, however, are trained with large steps, and at practical learning rates training is a discrete dynamical system far from its continuous limit. Gradient descent with large constant steps can show period doubling and chaos \citep{chen2024stability,kong2020stochasticity}, the route to chaos of the logistic map \citep{feigenbaum1978}, and some variable-step variants are chaotic \citep{vandendoel2012chaotic}. Large learning rates make networks \emph{catapult} out of sharp regions \citep{lewkowycz2020catapult}, and the learning rate in the first phase of training determines the curvature and gradient noise later on \citep{jastrzebski2020breakeven}. \citet{cohen2021eos} identified a regime in which full-batch training often operates: the sharpness (the largest eigenvalue of the Hessian) rises until it reaches the stability threshold of the optimizer, $2/\eta$ for gradient descent, and then \emph{stays there}, while the iterates oscillate along the top eigenvector and the loss keeps decreasing on average. This \emph{edge of stability} (EoS) also appears for adaptive methods \citep{cohen2022adaptive}, is maintained by self-stabilization \citep{damian2023selfstab} and has a stochastic counterpart \citep{andreyev2024eoss}. In dynamical-systems terms, the minimum has lost stability through a period-doubling bifurcation, and training proceeds near the period-two orbit born there. We therefore focus on how an optimizer changes this oscillatory regime after the minimum loses stability.

Velocity-Regularized Adam (VRAdam) \citep{vaidhyanathan2026vradam} is a recent physics-inspired optimizer. Adding a quartic term $\beta_3 v^4/4$ to the kinetic energy of a particle makes its effective mass grow with speed, which suggests damping the learning rate as
\begin{equation}\label{eq:vradam}
\eta_t=\frac{\alpha_0}{1+\min(\beta_3\|m_t\|^2,\alpha_1)},
\end{equation}
where $m_t$ is Adam's first moment (written $v_t$, a velocity, in VRAdam) and the norm is taken over all parameters. The analysis of VRAdam covers the regime \emph{below} the EoS threshold, where the damper does not change the linear stability of the minimum. We ask what such kinetic modifications do \emph{above} the threshold. Our contributions are:
\begin{enumerate}
\item \textbf{Period-two orbits, local stability and global attraction (\cref{sec:kinetic,sec:theory,sec:global}).} Following the terminology of Hamiltonian descent \citep{maddison2018hamiltonian}, we call updates $u=\nabla K(q)$ with a radial kinetic energy $K$ \emph{kinetic maps}; our results concern their discrete dynamics beyond the stability threshold. With momentum, the symmetric period-two orbit beyond the edge is strictly linearly stable if and only if the elasticity of the map at the orbit lies in $(0,1)$ (\cref{thm:main}), and on the orbit the effective step is exactly at the edge, $\eta\lambda_{\max}g/a_{\mathrm{EoS}}=1$ (\cref{cor:gated}). VRAdam's damper is a non-convex member of the family, and its orbit loses stability at twice the threshold step; the Legendre dual of its own Lagrangian is convex and keeps the orbit stable at every step. Whether trajectories reach the orbit depends on how much elasticity the map keeps at large momentum, increasingly so as $\beta_1$ grows; an exact quadratic problem built from real data reproduces the closed forms to a relative error below $10^{-7}$.
\item \textbf{Delayed passage through the edge (\cref{sec:delay}).} Because the sharpness rises gradually, the bifurcation is crossed slowly, and the loss of stability is delayed as described by Neishtadt's theory of dynamical bifurcations. In a network with a controlled sweep, the deviation along the top Hessian direction follows the linear theory without fitted parameters; for slow sweeps the exit point obeys Neishtadt's entry--exit relation, and noise erases this memory as predicted.
\item \textbf{Neural-network dynamics and optimizer benchmarks (\cref{sec:networks,sec:practice}).} In a full-batch MLP the effective-edge law holds and the VR map fails as predicted; in ResNet-32 the orbit forms too late, but maps that act early prevent catapult collapse; with Adam, maps on the preconditioned direction cannot see the amplitude of the oscillation. On ResNet-18, controls with matched effective learning rates remove most of the difference for an Adam map and about half of the gain of Kinetic Muon; the remainder is not resolved by five seeds.
\end{enumerate}
\Cref{sec:discussion} discusses connections to capture into resonance, self-organized criticality, ponderomotive forces and feedback control, reports a negative result on grokking, and summarizes limitations and conclusions.

\section{Background and kinetic maps}\label{sec:related}
\subsection{Related work}
\paragraph{Edge of stability.} For gradient descent with step $\eta$ on a quadratic, the minimum is stable if and only if $\eta\lambda_{\max}<2$. With EMA momentum $\beta_1$ the threshold becomes $\eta\lambda_{\max}<2(1+\beta_1)/(1-\beta_1)$, and for Adam the relevant curvature is the preconditioned sharpness $\lambda_{\max}(P^{-1}H)$ with $P=\mathrm{diag}(\sqrt{\hat s}+\epsilon)$, which gives $38/\eta$ at $\beta_1=0.9$ \citep{cohen2021eos,cohen2022adaptive}. Self-stabilization explains how the oscillation along the top eigenvector reduces the sharpness through third-order terms \citep{damian2023selfstab}; normalized gradient descent was analysed by \citet{arora2022eos}. Period-two orbits beyond the threshold have been studied in low-dimensional models \citep{ma2022beyond,chen2023beyond,zhu2023minimalist,song2023trajectory,kong2020stochasticity} and, for gradient descent, derived from a variational functional, with a supercritical bifurcation in two-layer linear networks \citep{litman2026origin}; \citet{wang2025regularity} argue, with proofs for a class of nonconvex functions and experiments on networks, that good regularity of the objective makes edge of stability, balancing and catapults more likely at large learning rates. \citet{ma2022beyond} prove that one-dimensional gradient descent converges to the period-two orbit from almost every start for even losses whose curvature decreases away from the minimum. This global result is stronger than a local analysis; without momentum, our criterion concerns only the linear stability of the orbit. Clipped gradient descent has strictly clipped two-cycles with a unit multiplier \citep{yuanwen2026clipped}, and \citet{fong2026adam} analyse Adam's own oscillations. In minibatch training the full-batch sharpness stays below the threshold while a \emph{batch sharpness} sits at it \citep{andreyev2024eoss}; momentum changes this plateau in a way that depends on the batch size \citep{andreyev2026momentum}. For scale-invariant (normalized) networks, \citet{kodryan2022three} distinguish convergence, chaotic-equilibrium and divergence regimes as the effective learning rate grows, \citet{lobacheva2021periodic} describe periodic destabilizations under batch normalization and weight decay, and \citet{sadrtdinov2024where} find that the best initial learning rates lie just above the convergence threshold.

\paragraph{Bifurcations in training.} Bifurcations also occur in the dynamics of the trained model itself: in ReLU-based recurrent networks, certain bifurcations of the network's state-space dynamics crossed during training come with diverging or vanishing gradients and can explain sudden loss jumps \citep{eisenmann2023bifurcations}. The bifurcations studied here are of a different kind: they belong to the optimizer map, with the learning rate and the sharpness as parameters.

\paragraph{Kinetic energies in optimization.} Hamiltonian descent \citep{maddison2018hamiltonian} matches a kinetic energy to the growth of the objective; relativistic kinetic energies bound the speed \citep{franca2020conformal}; Lion-$\mathcal K$ \citep{chen2024lionk} uses a general convex $\mathcal K$, and Cautious optimizers \citep{liang2026cautious} a general kinetic energy; \citet{karoni2026nonlinear} add damping that depends on the kinetic energy. New here are the dynamics beyond the EoS, the elasticity laws, and the identification of VRAdam's damper as a non-convex member of the family.

\paragraph{Update damping and learning-rate control.} Clipping in various forms suppresses loss spikes \citep{shazeer2018adafactor,wortsman2023stable,wang2025adagc,kumar2025zclip,huang2025spam}, and curvature-aware methods set the step from the estimated curvature \citep{roulet2024stepping,malitsky2020adgd}. We include a clipping baseline and two curvature controllers.

\subsection{Kinetic maps}\label{sec:kinetic}
In Hamiltonian descent, the gradient of a kinetic energy is called a kinetic map \citep{maddison2018hamiltonian}; we consider radial ones (notation is collected in \cref{tab:notation}). Let $q$ be the quantity that an optimizer turns into a step (heavy ball: the momentum; Adam: $\hat m/(\sqrt{\hat s}+\epsilon)$, with $\hat s$ the second moment). A \emph{kinetic map} with scale $c>0$, acting on a block $B$ of parameters (a tensor, or all parameters), is
\begin{equation}\label{eq:map}
u_B=\psi(q_B)=q_B\,g\!\left(\tfrac{\|q_B\|}{c}\right),\qquad g(0)=1,
\end{equation}
the gradient of the radial kinetic energy $K(q)=k(\|q\|)$ with $k'(r)=r\,g(r/c)$. We call $g$ the \emph{step multiplier} and $\phi(z)=z\,g(z)$ the radial \emph{speed profile}, in units of $c$. $K$ is convex if and only if $\phi$ is non-decreasing. The \emph{elasticity} of the map is
\begin{equation}\label{eq:elasticity}
\varepsilon(z)=\frac{z\,\phi'(z)}{\phi(z)}=\frac{d\ln\phi}{d\ln z}.
\end{equation}
This is the standard elasticity of a function, familiar from economics: the relative change of the speed per relative change of the momentum. A map with $\varepsilon\equiv1$ does not damp; $0<\varepsilon<1$ means that the speed grows more slowly than the momentum; $\varepsilon<0$ (a non-convex $K$) means that the speed falls as the momentum grows. Five maps matter here (\cref{fig:maps}a,b):
\begin{itemize}
\item \textbf{VR map} $\phi(z)=z/(1+z^2)$: VRAdam's damper with $c=\beta_3^{-1/2}$ (ignoring its floor). $K=\tfrac{c^2}{2}\log(1+z^2)$ is \emph{non-convex} for $z>1$, where the speed decreases with momentum.
\item \textbf{Quartic-Legendre map} $\phi=s$ with $s+s^3=z$: the velocity of VRAdam's own Lagrangian $L=\tfrac12v^2+\tfrac14v^4$, obtained by inverting $p=\partial L/\partial v$. Convex; $\varepsilon\to1/3$.
\item \textbf{Cubic-Legendre map} $\phi=s$ with $s+s^2=z$, i.e.\ $s=2z/(1+\sqrt{1+4z})$ and $g=2/(1+\sqrt{1+4z})$, from $L=\tfrac12v^2+\tfrac13|v|^3$. Convex; $\varepsilon\to1/2$.
\item \textbf{Relativistic map} $\phi(z)=z/\sqrt{1+z^2}$, $K=c^2\sqrt{1+z^2}$ \citep{franca2020conformal}: convex, speed bounded by $c$, $\varepsilon\to0$.
\item \textbf{Norm clipping} $\phi(z)=\min(z,1)$: Huber $K$, convex but not strictly; $\varepsilon=0$ when clipped.
\end{itemize}
VRAdam derives the equation of motion $\dot v=-\nabla f/(m+3\beta_3v^2)$ correctly, but then uses $1/(1+\beta_3v^2)$ as a multiplier of the velocity in the position update. The direct damping rule and the Legendre-dual construction therefore differ in convexity.

\begin{figure}[htbp]
\centering
\includegraphics[width=\linewidth]{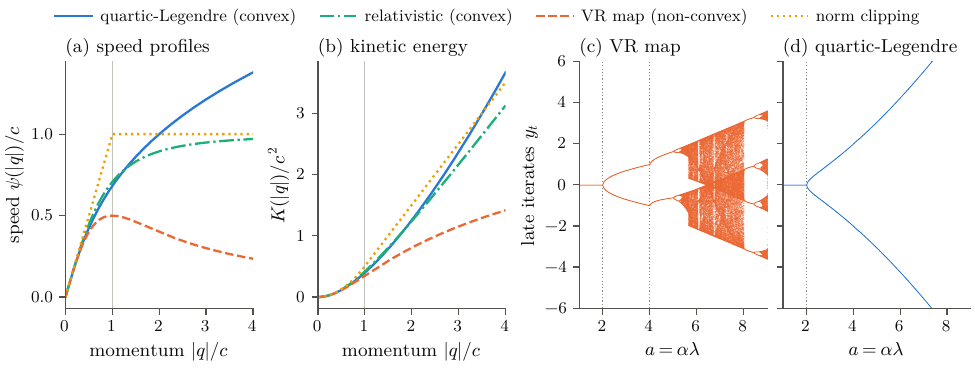}
\caption{(a) Radial speed profiles and (b) the kinetic energies they imply. (c,d) Late iterates of the reduced model without momentum, $y'=y-a\phi(y)$. The VR map has a stable symmetric two-cycle only for $2<a<4$, followed by a symmetry-breaking pitchfork, period doubling and chaos (first positive Lyapunov exponent at $a\approx\VRLyapA$); the quartic map has a stable two-cycle for every $a>2$.}
\label{fig:maps}
\end{figure}

\section{Period-two orbits and local stability}\label{sec:theory}
\paragraph{Reduced model.} Consider one eigen-direction of a quadratic with curvature $\lambda$ (for Adam: an eigenvalue of $P^{-1/2}HP^{-1/2}$ with the preconditioner frozen), EMA momentum $b=\beta_1\in[0,1)$, and a kinetic map acting on the momentum. In units of $c$, with $y$ the gradient and $v$ the momentum,
\begin{equation}\label{eq:reduced}
v'=b\,v+(1-b)\,y,\qquad y'=y-a\,\phi(v'),\qquad a=\eta\lambda .
\end{equation}
We assume that $\phi$ is odd and $C^1$ with $\phi'(0)=1$, $\phi>0$ on $(0,\infty)$, and that $g(z)=\phi(z)/z$ decreases strictly to a limit $g_\infty\in[0,1)$. Let $\kappa=(1-b)/(1+b)$, $A=\kappa a$ and $\aeos=2/\kappa=2(1+b)/(1-b)$.

\begin{theorem}[EoS orbits of kinetic maps]\label{thm:main}
(i) The minimum is linearly stable if and only if $A<2$, i.e.\ $a<\aeos$.
(ii) For $2<A<2/g_\infty$ there is exactly one symmetric period-two orbit $(y,v)\mapsto(-y,-v)$; on it $v'=\pm z^\ast$ with $g(z^\ast)=2/A$ and $|y|=z^\ast/\kappa$.
(iii) The one-step Jacobian at the orbit has the characteristic polynomial $\nu^2-(1+b)(1-2\varepsilon^\ast)\nu+b$ with $\varepsilon^\ast=\varepsilon(z^\ast)$, and the two-step Jacobian is its square. The orbit is strictly linearly stable (spectral radius below one) if $0<\varepsilon^\ast<1$ and unstable if $\varepsilon^\ast<0$. At $\varepsilon^\ast=0$ it is nonhyperbolic: the two-step Jacobian has a unit multiplier.
\end{theorem}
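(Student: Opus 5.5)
The proof is a direct computation on the planar map \eqref{eq:reduced}, $F(y,v)=(y',v')$. It proceeds in three steps, one for each part of the theorem.

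\textbf{Step 1 (linear stability of the minimum).} Linearize at the origin. Since $\phi'(0)=1$, the Jacobian is
\[
\begin{pmatrix}1-a(1-b) & -ab\\ 1-b & b\end{pmatrix}
\]
in the variables $(y,v)$, after substituting $v'=bv+(1-b)y$. Its trace is $1+b-a(1-b)$ and its determinant is $b$. Apply the Jury (Schur--Cohn) conditions for $\nu^2-T\nu+D$. The conditions $|D|<1$ and $1-T+D>0$ always hold, because $1-T+D=a(1-b)>0$. So stability reduces to $1+T+D>0$, that is, $2(1+b)>a(1-b)$. This is exactly $A<2$.

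\textbf{Step 2 (existence and uniqueness of the symmetric two-cycle).} Impose $F(y,v)=(-y,-v)$. The momentum equation gives $-v=bv+(1-b)y$, so $v=-\kappa y$, and hence $v'=-v=\kappa y$. The gradient equation gives $-y=y-a\phi(\kappa y)$, so $2y=a\phi(\kappa y)$. Write $z=\kappa y>0$. Then $2z/\kappa=a\,z\,g(z)$, which is $g(z)=2/A$. Since $g$ decreases strictly from $g(0)=1$ to $g_\infty$ and is continuous, a unique solution $z^\ast>0$ exists exactly when $g_\infty<2/A<1$. By oddness the same argument gives the mirror orbit, which completes (ii).

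\textbf{Step 3 (Jacobian at the orbit).} Linearize at a point of the orbit. The only change from Step 1 is that $\phi'(0)=1$ is replaced by $\phi'(z^\ast)$. Using $\phi(z^\ast)=z^\ast g(z^\ast)=2z^\ast/A$, we get
\[
a\,\phi'(z^\ast)=a\,\varepsilon^\ast\,g(z^\ast)=\frac{2\varepsilon^\ast}{\kappa}.
\]
The trace is therefore
\[
1+b-(1-b)\,\frac{2\varepsilon^\ast}{\kappa}=(1+b)(1-2\varepsilon^\ast),
\]
and the determinant is again $b$. This gives the stated polynomial.

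The two-step Jacobian is $J(-p)J(p)$. Because $\phi'$ is even, $J(-p)=J(p)$, so the two-step Jacobian is the square. Its multipliers are $\nu^2$, and stability of the cycle is decided by the roots $\nu$.

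Now apply Jury again with $T=(1+b)(1-2\varepsilon^\ast)$ and $D=b$:
\begin{itemize}
\item $1-T+D=2\varepsilon^\ast(1+b)$, which is positive iff $\varepsilon^\ast>0$;
\item $1+T+D=2(1+b)(1-\varepsilon^\ast)$, which is positive iff $\varepsilon^\ast<1$;
\item $|b|<1$ holds always.
\end{itemize}
Hence both roots lie strictly inside the unit disk when $0<\varepsilon^\ast<1$. If $\varepsilon^\ast<0$, the characteristic polynomial is negative at $\nu=1$ and tends to $+\infty$ as $\nu\to\infty$, so there is a real root $\nu>1$ and the cycle is unstable. If $\varepsilon^\ast=0$, then $\nu=1$ is a root, and the two-step Jacobian has multiplier $1$.

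\textbf{Expected obstacle.} The most delicate point is the boundary bookkeeping. First, one must check that the automatic condition $\varepsilon^\ast<1$ follows from $g$ decreasing: $\varepsilon=1+zg'/g<1$, which holds provided $g'(z^\ast)<0$ rather than only $g$ being weakly decreasing. If only $g'\le0$ is available, one adds this as a mild genericity condition. Second, one must handle $b=0$: there the polynomial degenerates to $\nu(\nu-(1-2\varepsilon^\ast))$, and the same Jury inequalities still apply.
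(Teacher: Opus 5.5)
Your proposal is correct and follows the paper's proof essentially step by step: the Jury test on $J_0$ (where only $1+T+D>0$ is binding), the reduction of $F(p)=-p$ to $v=-\kappa y$ and $g(z^\ast)=2/A$, and the identity $a\phi'(z^\ast)=2\varepsilon^\ast/\kappa$ (the paper's $Ad=2\varepsilon^\ast$), combined with the evenness of $\phi'$ to get $DF^2=J^2$ and the condition $|1-2\varepsilon^\ast|<1$. One wording slip: the root $z=-z^\ast$ is the second point $-p$ of the same self-symmetric orbit, not a separate ``mirror orbit'', which would contradict uniqueness; your remark that $\varepsilon^\ast=1$ can occur when $g'(z^\ast)=0$ is accurate but not needed, since the theorem assumes $\varepsilon^\ast<1$.
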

\begin{proof}[Proof sketch] (ii) $(y,v)\mapsto(-y,-v)$ requires $v=-\kappa y$ and $v'=\kappa y=:z$; then $y'=-y$ gives $a\phi(z)=2z/\kappa$, i.e.\ $g(z)=2/A$. (iii) With $d=\phi'(z^\ast)$, $J=\begin{psmallmatrix}1-ad(1-b)&-adb\\1-b&b\end{psmallmatrix}$, $\det J=b$, $\operatorname{tr}J=(1+b)(1-Ad)$ and $Ad=2\varepsilon^\ast$. $J$ is the same at both points of the orbit because $\phi'$ is even, and the Jury criterion gives $|1-2\varepsilon^\ast|<1$. Details are in \cref{app:proofs}.
\end{proof}
The VR map's orbit loses stability at $\varepsilon^\ast=0$ ($A=4$) through a symmetry-breaking pitchfork, which can be described exactly (\cref{prop:vrasym} in \cref{app:proofs}): for every $b$ a pair of mirror-image asymmetric two-cycles branches off at $A=4$, and it is strictly linearly stable for $4<A<4+(1+b^2)/(1+b)^2$. Without momentum this is $4<A<5$, and the asymmetric cycles period-double exactly at $A=5$ (\cref{fig:maps}c).

\begin{corollary}[Effective edge of stability]\label{cor:gated}
On the orbit the step multiplier is $g(z^\ast)=2/A=\aeos/a$, i.e.
\begin{equation}\label{eq:gated}
\Reff\;:=\;\frac{\eta\,\lambda\,g(z^\ast)}{\aeos}\;=\;1 .
\end{equation}
The curvature can exceed the plain EoS threshold by the factor $1/g$; the oscillation amplitude adjusts so that the \emph{effective} step sits exactly at the edge. Directions with smaller positive curvature see the same multiplier and contract (\cref{prop:blocks}).
\end{corollary}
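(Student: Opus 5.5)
The corollary follows almost directly from part (ii) of \cref{thm:main}, so the plan is to read off the orbit condition and rewrite it in the original units. By \cref{thm:main}(ii), for $2<A<2/g_\infty$ the symmetric period-two orbit has $v'=\pm z^\ast$ with $g(z^\ast)=2/A$. Since $A=\kappa a$ and $\aeos=2/\kappa$, we get $2/A=2/(\kappa a)=\aeos/a$. With $a=\eta\lambda$ this gives
\[
\frac{\eta\lambda\,g(z^\ast)}{\aeos}=\frac{a\cdot \aeos/a}{\aeos}=1,
\]
which is \eqref{eq:gated}. The existence range $2<A<2/g_\infty$ translates to $\aeos<\eta\lambda<\aeos/g_\infty$. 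This gives the statement that the curvature may exceed the plain threshold $\aeos/\eta$ by the factor $1/g(z^\ast)$, up to $1/g_\infty$. The oscillation amplitude is then fixed by $z^\ast=g^{-1}(\aeos/a)$, which is well defined because $g$ decreases strictly. I would also recall $|y|=z^\ast/\kappa$ to make the amplitude statement concrete.

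For the claim about other directions, I would set up the multi-direction version of \eqref{eq:reduced} for a block $B$. Take a quadratic with eigen-directions $\lambda_1=\lambda>\lambda_2\ge\dots$, with the map acting on the block norm $\|q_B\|$. On the orbit the block norm is carried by the top direction, since the others are at zero. So every direction $i$ sees the frozen multiplier $g(z^\ast)$, and its linearization is plain EMA momentum with effective step $a_i=\eta\lambda_i g(z^\ast)$.

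One subtlety needs care here. The linearization of $\psi(q)=q\,g(\|q\|/c)$ in directions orthogonal to $q$ is exactly $g(\|q\|/c)$ times the identity. The radial derivative $\phi'$ appears only along $q$. Hence the transverse Jacobian is the heavy-ball matrix with parameter $a_i$, and it is the same at both points of the orbit because $g$ is even in the sign of $q$.

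By \cref{thm:main}(i), applied to this linear map, direction $i$ contracts iff $\kappa a_i<2$, i.e.\ $\eta\lambda_i g(z^\ast)<\aeos$. This holds because $\lambda_i<\lambda$ and $\eta\lambda g(z^\ast)=\aeos$. This would be packaged as \cref{prop:blocks}.

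The main obstacle is this transverse step. One must check that the cross terms from $\nabla\|q_B\|$ vanish for directions orthogonal to the orbit; this uses that the orbit lies on a single eigenvector. One must also handle the two-step product, which is just the square of the identical one-step transverse Jacobian. The rest is bookkeeping of units ($c$ and the $\kappa$ conversion).
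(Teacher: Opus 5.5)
Your proposal is correct and follows the paper's own argument. The identity is read off from $g(z^\ast)=2/A$ together with $A=\kappa\eta\lambda=2\eta\lambda/\aeos$, and the claim about other directions rests, as in the proof of \cref{prop:blocks}, on the multiplier being frozen at $g(z^\ast)$ on the orbit and on $\partial\|q_B\|/\partial q_i=0$ at $q_i=0$. Your transverse step applies \cref{thm:main}(i) with effective step $\eta\lambda_i g(z^\ast)\in(0,\aeos)$, which extends the paper's $b=0$ computation $x_i'=(1-2\lambda_i/\lambda_{\max})x_i$ to general $\beta_1$.
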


\begin{corollary}[Closed forms]\label{cor:closed}
For $A>2$ (amplitude $z^\ast$, elasticity $\varepsilon^\ast$, stability):
\begin{center}\small
\begin{tabular}{lccc}
\toprule
map & amplitude & $\varepsilon^\ast$ & locally stable for \\
\midrule
VR & $z^{\ast2}=A/2-1$ & $(4-A)/A$ & $2<A<4$ \\
relativistic & $z^{\ast2}=(A/2)^2-1$ & $4/A^2$ & all $A>2$ \\
quartic-Legendre & $s^{\ast2}=A/2-1,\ z^\ast=s^\ast A/2$ & $A/(3A-4)\to1/3$ & all $A>2$ \\
cubic-Legendre & $s^\ast=A/2-1,\ z^\ast=s^\ast(1+s^\ast)$ & $A/(2A-2)\to1/2$ & all $A>2$ \\
clipping$^\dagger$ & continuum of orbits & $0$ & neutral \\
\bottomrule
\end{tabular}
\end{center}
VRAdam's damper keeps a stable EoS orbit only up to twice the threshold step, for every $\beta_1$. $^\dagger$Clipping does not satisfy the smoothness and strict-decrease assumptions of \cref{thm:main}; it is listed as a limiting case.
\end{corollary}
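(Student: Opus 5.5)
The plan is to read the corollary off \cref{thm:main} map by map. For each map, solve the orbit equation $g(z^\ast)=2/A$ from part (ii), compute the elasticity $\varepsilon^\ast=\varepsilon(z^\ast)$, and then use part (iii): the orbit is strictly stable when $0<\varepsilon^\ast<1$, unstable when $\varepsilon^\ast<0$, and neutral at $\varepsilon^\ast=0$. For the implicitly defined Legendre maps I will parametrize everything by the velocity $s$, so that no inversion of $z(s)$ is needed. The key identity is
\[
\varepsilon=\frac{d\ln s}{d\ln z}=\frac{z}{s\,z'(s)},
\]
together with $g=s/z$. The statement ``for every $\beta_1$'' is automatic, because $b$ enters only through $A=\kappa a$.

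\textbf{VR map.} Here $g=1/(1+z^2)$, so the orbit equation gives $z^{\ast2}=A/2-1$. This solution is positive exactly when $A>2$; since $g_\infty=0$, the existence range $2<A<2/g_\infty$ is all $A>2$. From $\ln\phi=\ln z-\ln(1+z^2)$ we get $\varepsilon=(1-z^2)/(1+z^2)$. Substituting $z^{\ast2}$ gives $\varepsilon^\ast=(2-A/2)/(A/2)=(4-A)/A$. This lies in $(0,1)$ exactly when $2<A<4$ and is negative for $A>4$, which gives the claimed stability range, i.e.\ $a<2\aeos$.

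\textbf{Relativistic map.} Here $g=(1+z^2)^{-1/2}$, so $z^{\ast2}=(A/2)^2-1$. The elasticity is $\varepsilon=1-z^2/(1+z^2)=1/(1+z^2)$, which gives $\varepsilon^\ast=4/A^2\in(0,1)$ for all $A>2$.

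\textbf{Quartic-Legendre map.} We have $z=s+s^3$ and $g=1/(1+s^2)$, so $s^{\ast2}=A/2-1$ and $z^\ast=s^\ast(1+s^{\ast2})=s^\ast A/2$. The elasticity is $\varepsilon=(1+s^2)/(1+3s^2)$. With $s^{\ast2}=A/2-1$ this becomes $\varepsilon^\ast=(A/2)/(3A/2-2)=A/(3A-4)$, which lies in $(0,1)$ for $A>2$ and tends to $1/3$.

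\textbf{Cubic-Legendre map.} We have $z=s+s^2$ and $g=1/(1+s)$, so $s^\ast=A/2-1$. The elasticity is $\varepsilon=(1+s)/(1+2s)$, which gives $\varepsilon^\ast=(A/2)/(A-1)=A/(2A-2)$. This lies in $(0,1)$ for $A>2$ and tends to $1/2$.

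For these three convex maps I must also check the hypotheses of \cref{thm:main}. In each case $\phi$ is odd and $C^1$ with $\phi'(0)=1$, and $g$ decreases strictly to $g_\infty=0$. The Legendre maps are defined implicitly, so $C^1$ smoothness follows from the inverse function theorem, since $z'(s)\ge1>0$.

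\textbf{Clipping.} This case falls outside the theorem, so I treat it by direct computation. For $z\ge1$ we have $\phi\equiv1$ and $g=1/z$, so the orbit equation gives a single $z^\ast=A/2$ for each $A$. The word ``continuum'' therefore needs care: I will argue it directly in the reduced model. Away from symmetric points, $y'=y-a\,\mathrm{sign}(v')$ holds whenever $|v'|\ge1$. The two-step map is then a translation-type map in $y$ with a unit multiplier, matching $\varepsilon=0$ in (iii). This produces a one-parameter family of period-two orbits, including asymmetric ones, wherever both momenta stay clipped, and hence neutral stability.

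\textbf{Main obstacle.} The only step that is not routine algebra is making the clipping ``continuum'' precise. I would first try to exhibit the explicit family in the $b=0$ case, taking $y_1-y_2=a$ with both $|y_i|\ge1$. I would then extend it to $b>0$ by solving the linear two-cycle conditions for $v$, and argue that the family persists on an open set of initial $y$.
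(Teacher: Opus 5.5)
Your proposal is correct and follows the paper's route. For each map you solve $g(z^\ast)=2/A$ from \cref{thm:main}(ii), evaluate $\varepsilon^\ast$ (parametrizing the Legendre maps by $s$, exactly as the appendix does), and apply the criterion of part (iii); all of your algebra checks out. Your treatment of clipping goes beyond the paper, which simply lists it as a limiting case. The family of clipped two-cycles you sketch does exist, for $v_1+v_2\in[2-A,A-2]$, with two-step multipliers $1$ and $b^2$. This correctly explains the ``continuum'' and ``neutral'' entries even though the symmetric orbit $z^\ast=A/2$ itself is unique.
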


\begin{remark}[Power laws and VRAdam's floor]
If $\varepsilon(z)\to p$ (e.g.\ $\phi\sim z^p$), then $\varepsilon^\ast\to p$ as $A\to\infty$; a kinetic term $|v|^{n+1}/(n+1)$ gives $p=1/n$. VRAdam's floor $1/(1+\alpha_1)$ makes the map linear for $|v'|>\sqrt{\alpha_1}$; with $\alpha_1=19$ the relevant regime is $2\aeos<a\lesssim20\aeos$, where the VR map's symmetric orbit is unstable and the reduced model shows asymmetric cycles, period doubling and chaos.
\end{remark}

The multiplier acts on every direction of the block on which it is computed, which matters for how the map is applied:
\begin{proposition}[Global versus per-block maps]\label{prop:blocks}
Let $f=\tfrac12\sum_i\lambda_i x_i^2$, $b=0$, and one map on the norm of the full gradient. If the top direction is on the stable orbit with $\eta\lambda_{\max}>2$ and the other coordinates are small, each other coordinate evolves as $x_i'=(1-2\lambda_i/\lambda_{\max})x_i$: the map slows every other direction down to the rate of gradient descent at the edge. With one map per block, blocks that do not contain the top direction keep the step $\eta$ to linear order (their own multipliers are $1+O(\|q_B\|/c)$ for the cubic map and $1+O(\|q_B\|^2/c^2)$ for the quartic, relativistic and VR maps).
\end{proposition}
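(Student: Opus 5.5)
I would work directly with the gradient-descent recursion ($b=0$) for $f=\tfrac12\sum_i\lambda_i x_i^2$ and a single map acting on the full gradient $q=\nabla f=(\lambda_i x_i)_i$. The update is then $x_i'=x_i-\eta\lambda_i x_i\,g(\|q\|/c)$ for every $i$, so all coordinates share one scalar multiplier. Let $j$ index the top direction. If every other coordinate vanishes, coordinate $j$ is exactly the reduced model \eqref{eq:reduced} with $b=0$, $\kappa=1$, $A=a=\eta\lambda_{\max}$. On the symmetric orbit of \cref{thm:main}(ii), \cref{cor:gated} gives $g(z^\ast)=2/(\eta\lambda_{\max})$. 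Here $z^\ast=\|q\|/c$ is the same at both points of the two-cycle, because the orbit is symmetric.

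Next I linearise in the small coordinates $x_i$, $i\ne j$. Since $\|q\|^2=\lambda_j^2x_j^2+\sum_{i\ne j}\lambda_i^2x_i^2$, the small coordinates enter the norm only quadratically, so $g(\|q\|/c)=g(z^\ast)+O(\sum_{i\neq j}x_i^2)$. The top coordinate also feels only a second-order perturbation. Because the orbit is strictly stable ($0<\varepsilon^\ast<1$), $x_j$ stays $O(\|x_{\perp}\|^2)$-close to the orbit. The multiplier seen by the others is therefore $2/(\eta\lambda_{\max})$ to linear order, which gives
\[
x_i'=\bigl(1-\eta\lambda_i\cdot 2/(\eta\lambda_{\max})\bigr)x_i=(1-2\lambda_i/\lambda_{\max})x_i.
\]
This is the rate of gradient descent with step $2/\lambda_{\max}$, i.e.\ at the edge. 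For $0<\lambda_i<\lambda_{\max}$ the factor lies in $(-1,1)$, so these coordinates contract, as claimed in \cref{cor:gated}. The linearised coupled Jacobian at the orbit is block triangular, with the $x_j$ block from \cref{thm:main}(iii) and a diagonal block for the rest. So strict stability of the joint cycle follows provided $\lambda_i<\lambda_{\max}$.

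For the per-block statement, take a block $B$ not containing $j$. Its multiplier is $g(\|q_B\|/c)$ with $\|q_B\|$ small, so I expand $g$ near $0$ using $g(0)=1$.
\begin{itemize}
\item \emph{Cubic map.} From $g=2/(1+\sqrt{1+4z})$ we get $g=1-z+O(z^2)$, so the multiplier is $1+O(\|q_B\|/c)$.
\item \emph{Quartic map.} Here $s=z-z^3+\dots$, so $g=1-z^2+\dots$.
\item \emph{Relativistic map.} Here $g=(1+z^2)^{-1/2}=1-z^2/2+\dots$.
\item \emph{VR map.} Here $g=1/(1+z^2)$.
\end{itemize}
The last three give $1+O(\|q_B\|^2/c^2)$. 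Hence $x_i'=x_i-\eta\lambda_i x_i(1+o(1))$, i.e.\ the step $\eta$ to linear order. Note that for the cubic map the correction $-\eta\lambda_i x_i\|q_B\|/c$ is quadratic in the small coordinates, so it does not affect linearisation either.

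The main obstacle is justifying the claim that the top coordinate remains on the orbit while the others are perturbed. That is, I must show that the quadratic coupling through the norm does not destabilise the cycle. I would handle this by the block-triangular Jacobian argument at the exact orbit $(x_j=\pm z^\ast c/\lambda_j,\ x_\perp=0)$, which is an exact two-cycle of the full system. Linear stability of that cycle then gives local validity of the stated linear rates.
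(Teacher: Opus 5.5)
Your proposal is correct and is essentially the paper's argument: on the orbit the shared multiplier is the constant $g(z^\ast)=2/(\eta\lambda_{\max})$, and because the other coordinates enter $\|q\|$ only quadratically ($\partial\|q\|/\partial x_i=0$ at $x_i=0$), it is unchanged to linear order, which gives $x_i'=(1-2\lambda_i/\lambda_{\max})x_i$. Your Jacobian at the exact two-cycle (in fact diagonal, not just block triangular) and your Taylor expansions of $g$ near $0$ for the per-block claim spell out what the paper's one-line proof leaves implicit; the only slip is in the stability remark for the joint cycle, which needs $0<\lambda_i<\lambda_{\max}$, not just $\lambda_i<\lambda_{\max}$.
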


Without momentum, the VR map has a positive Lyapunov exponent from $A\approx\VRLyapA$ (\cref{fig:maps}c); with $b=0.9$ the Floquet radius of its symmetric orbit exceeds one for $a/\aeos>2$, as \cref{thm:main} gives, while for the convex maps it stays below one (\cref{fig:momentum} in \cref{app:reduced}).

\section{Delayed passage through the edge}\label{sec:delay}
In training, the step is usually fixed while the sharpness rises (progressive sharpening). The bifurcation parameter $R=\eta\lambda_{\max}/\aeos$ therefore moves slowly through $R=1$ instead of being fixed. Bifurcations crossed in this way are called \emph{dynamical bifurcations}. The central result, due to Neishtadt \citep{neishtadt1987delay,neishtadt1988delay} and reviewed by \citet{neishtadt2021stability}, is that when a parameter slowly crosses a stability threshold, the system stays near the equilibrium, now unstable, for a long time, of order $1/\rho$ steps for a sweep at rate $\rho$, before it leaves. We state the linear theory for our map and then test it in a network.

\paragraph{Linear theory.} Near the minimum, a small deviation $w_t=(y_t,v_t)$ of \eqref{eq:reduced} obeys $w_{t+1}=J_0(R_t)\,w_t$, where $J_0$ is the Jacobian at the minimum (\cref{app:proofs}). Let $\Lambda(R)=\log r(J_0(R))$ be the local growth rate per step, with $r$ the spectral radius. For $b=0.9$ the eigenvalues of $J_0$ are complex with modulus $\sqrt b$ for $7\times10^{-4}<R<0.9993$, so in this band
\begin{equation}\label{eq:lambdabelow}
\Lambda(R)=\tfrac12\log b=\RampLamBelow,
\end{equation}
independent of $R$. At $R\simeq0.999307$ the eigenvalues coalesce; between this value and the edge they are real and stable, and at $R=1$ one of them passes through $-1$ (period doubling), and above the edge $\Lambda$ rises steeply ($\Lambda=\RampLamA$ at $R=1.01$ and $\RampLamB$ at $R=1.1$). If $R$ changes slowly, $R_t=R_{\rm in}+\rho t$ with $\rho\ll1$, the envelope of the deviation follows the local rate:
\begin{equation}\label{eq:wkb}
|w_t|\;\approx\;C\,|w_{\rm in}|\,\exp\Big(\sum_{s<t}\Lambda(R_s)\Big)\;\approx\;C\,|w_{\rm in}|\,\exp\Big(\frac1\rho\int_{R_{\rm in}}^{R_t}\Lambda(R)\,dR\Big),
\end{equation}
to leading order as $\rho\to0$, for generic initial phases; the prefactor $C$ (turning eigenvectors, phase of the rotation) is not uniformly of order one near the coalescence, but changes $\rho\log|w_t|$ only by $O(\rho|\log\rho|)$. The deviation is back at its initial size when the integral returns to zero, which defines the exit point $R_{\rm out}$ through Neishtadt's \emph{entry--exit relation}
\begin{equation}\label{eq:entryexit}
\int_{R_{\rm in}}^{R_{\rm out}}\Lambda(R)\,dR=0 .
\end{equation}
As $\rho\to0$ the delay $R_{\rm out}-1$ stays finite. To leading order it depends only on where the sweep began, not on $\rho$ or on the size of the deviation: \eqref{eq:entryexit} gives $R_{\rm out}=1.080$ for $R_{\rm in}=0.5$ and $\RampPsiDeep$ for $R_{\rm in}=\RampRinDeep$. The contraction accumulated below the edge must be repaid above it. Neishtadt proved that in analytic systems the delay persists even when the equilibrium moves with the parameter (in merely smooth systems it generally does not); in our map the minimum is a fixed point for every $R$, so the linear delay carries over directly. For slow passage through a Hopf bifurcation it is known as the memory effect \citep{baer1989slow}, and swept period doubling in maps was studied by \citet{kapral1985nonautonomous} and \citet{baesens1991sweep}.

\paragraph{Noise erases the memory.} Now add a forcing $\zeta_t$ of strength $\sigma$, $w_{t+1}=J_0(R_t)w_t+\zeta_t$, which models minibatch noise, rounding or coupling to the rest of the network. A scalar closure that ignores the rotation and non-normality of $J_0$ gives $s_{t+1}^2=e^{2\Lambda(R_t)}s_t^2+\sigma^2$; in the complex band $e^{2\Lambda}=b$, so the variance contracts by $b$ per step and settles at $s_\infty^2=\sigma^2/(1-b)$. This is only a heuristic noise level: the variance of the two-dimensional deviation depends on how the noise enters and grows as $R\to1$ (for gradient noise of variance $\sigma^2$, as in \eqref{eq:ramprec} below, the stationary displacement variance at fixed $R<1$ is $\eta\sigma^2/[2\lambda(1-R)]$). Above the edge the deviation grows from the noise level $s_\times$ reached at the crossing and reaches a threshold $x_c$ at approximately
\begin{equation}\label{eq:noisyexit}
\int_{1}^{R_{\rm out}}\Lambda(R)\,dR\;\approx\;\rho\,\log\frac{x_c}{s_\times}.
\end{equation}
The exit then no longer depends on $R_{\rm in}$: the noise has erased the memory. A deviation that is still above $s_\times$ when the edge is crossed keeps its memory, so the exit follows \eqref{eq:entryexit} for weak noise and \eqref{eq:noisyexit} for strong noise. \Cref{eq:noisyexit} is a heuristic post-edge estimate for $s_\times<x_c$, obtained by amplifying the fluctuation present at the crossing and neglecting later noise injection; the delay it gives grows with the sweep rate $\rho$ and with $\log(1/\sigma)$. At fixed nonzero noise, sufficiently slow sweeps may instead reach the threshold at or before the edge. The quantitative comparisons below use Monte Carlo runs of the linear model \eqref{eq:ramprec}.

\paragraph{A controlled test in a network.} We impose the sweep in a network; the comparison uses no fitted coefficients, and the linear model is driven by the measured curvature path. We take the tanh MLP of \cref{app:networks} (784--128--128--10, MSE loss, 2{,}000 Fashion-MNIST images) in double precision, pretrain it with full-batch EMA momentum at $R=0.5$ for 1{,}500 steps, and then set the step at every iteration to $\eta_t=(R_{\rm in}+\rho t)\,\aeos/\hat\lambda_t$, where $\hat\lambda_t$ is the latest estimate of $\lambda_{\max}$ from warm-started power iteration, refreshed every five steps; the realized $R_t=\eta_t\lambda_{\max}(t)/\aeos$ therefore follows the linear ramp only approximately. A second copy of the network starts at $\theta+\delta u$, with $u$ the top Hessian eigenvector, and takes the same steps; no kinetic map is used, and $x_t=\langle\theta^{(2)}_t-\theta^{(1)}_t,u_t\rangle$ is the deviation along the top mode. The prediction is the linear recurrence for the top mode in the network's own coordinates, the displacement $x_t$ and the corresponding momentum component $d_t$ (the projection of $m_t^{(2)}-m_t^{(1)}$ on the top eigenvector):
\begin{equation}\label{eq:ramprec}
d_{t+1}=b\,d_t+(1-b)\big(\lambda_t x_t+\xi_t\big),\qquad x_{t+1}=x_t-\eta_t\,d_{t+1},
\end{equation}
with the steps $\eta_t$ actually used, $\lambda_t$ interpolated linearly between the eigenvalue estimates, the same initial deviation ($x_0=\delta$, $d_0=0$) and $\xi_t=0$ without injected noise. Changes of the eigenvector between estimates are neglected. This is the exact matrix product of the linear model along the measured path; \eqref{eq:wkb} and \eqref{eq:entryexit} are its slow-sweep approximations. In \RampKickValid{} ramps (three seeds, $R_{\rm in}$ from $0.1$ to $0.97$, $\rho$ from $0.001$ to $0.1$) the network follows the recurrence \eqref{eq:ramprec} while the deviation decays by up to \RampOrdersDown{} orders of magnitude and grows back (\cref{fig:ramp}a): the largest difference, relative to the local amplitude, is \RampKickSeriesErr\% in the median run and at most \RampKickSeriesErrMaxFast\% for $\rho\ge0.003$, and the exit occurs at the same step in network and model in every run. Details, including three excluded runs and the larger errors at $\rho=0.001$, are in \cref{app:delay}.

The exit point clearly remembers the entry point (\cref{fig:ramp}b): at $\rho=0.003$ a sweep entering at $R_{\rm in}=\RampRinDeep$ leaves at $R_{\rm out}=\RampExitDeep$, one entering at $\RampRinShallow$ at $\RampExitShallow$, against $\RampPsiDeep$ and $\RampPsiShallow$ from \eqref{eq:entryexit}. For entries up to $R_{\rm in}=0.9$ the differences from \eqref{eq:entryexit} are at most \RampPsiDiffThree{} at $\rho=0.003$ and \RampPsiDiffOne{} at $\rho=0.001$; they are finite-$\rho$ corrections, which shrink as $\rho\to0$ and which the model on the measured path reproduces exactly. Adding isotropic gradient noise of standard deviation $\sigma$ per coordinate to the second copy (it enters \eqref{eq:ramprec} as $\xi_t$, multiplied by $1-b$ in the momentum; the model distributions are Monte Carlo runs of \eqref{eq:ramprec} with the same $\sigma$) erases the memory as predicted (\cref{fig:ramp}c): with $\sigma=10^{-9}$ the exit still moves with the entry point (from \RampKnLoDeep{} to \RampKnLoShallow), with $\sigma=10^{-7}$ it lies at \RampKnMidDeepLo--\RampKnMidDeepHi{} for every entry up to $R_{\rm in}=0.5$, and with $\sigma=10^{-5}$ the noise alone is as large as the kick, so the exit comes as soon as the edge is crossed. With noise alone (\cref{fig:ramp}d) the exit grows with $\rho$ and with $\log(1/\sigma)$, as \eqref{eq:noisyexit} predicts; \RampKnInBand{} of \RampKnRuns{} network exits with kick and noise, and \RampNoiseInBand{} of \RampNoiseRuns{} with noise only, lie within the 10--90\% range of the model's distribution. Near this well-trained state, the network's top mode behaves almost exactly like the linear two-dimensional map of the theory. The test shows that the delay follows Neishtadt's mechanism quantitatively; it does not show that the top mode of every network is this linear.

\begin{figure}[htbp]
\centering
\includegraphics[width=\linewidth]{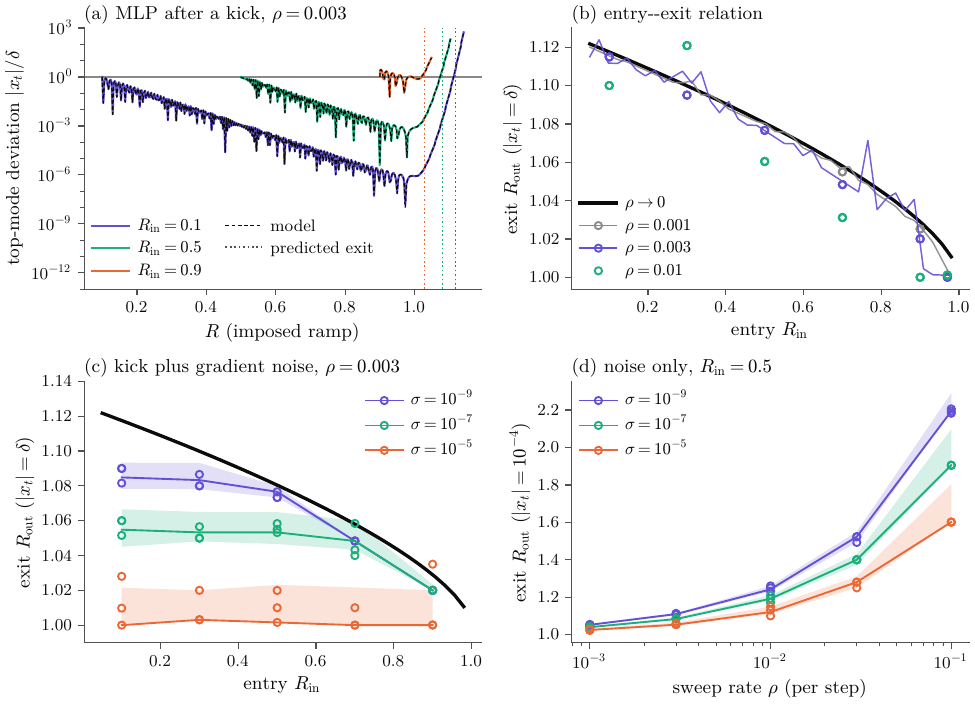}
\caption{Controlled sweeps through the edge in a network (tanh MLP, Fashion-MNIST, full-batch EMA momentum, $\beta_1=0.9$, no map; three seeds). (a) Deviation along the top Hessian direction after a kick $\delta=10^{-6}$ at $R_{\rm in}$, while $R$ is ramped at $\rho=0.003$ per step (solid: network, seed 0; dashed: linear recurrence \eqref{eq:ramprec} on the measured path; dotted: exit predicted by the slow-sweep relation \eqref{eq:entryexit}). (b) Exit point (deviation back at $\delta$) versus entry point: network (markers, $\rho\le0.01$), model on an exact linear ramp (thin lines, $\rho\le0.003$) and the $\rho\to0$ relation \eqref{eq:entryexit} (thick line). (c) Kick plus gradient noise $\sigma$, $\rho=0.003$: network (markers) and model median and 10--90\% range (lines, bands); the thick line is \eqref{eq:entryexit}. (d) Noise only, $R_{\rm in}=0.5$, exit at $|x_t|=10^{-4}$, versus sweep rate.}
\label{fig:ramp}
\end{figure}

\paragraph{Sweeps in ordinary training.} While the oscillation is still small the kinetic map does not act ($g\approx1$), so $\Reff=Rg$ follows $R$ above one until the orbit forms; unless the map is already engaged by the descent momentum, the delay therefore decides how far above the edge the effective step can go. In the reduced model with the cubic map, the peak of $\Reff$ grows with the sweep rate, and the linear theory above predicts it within \DelayPredErrSlow\% for $\rho<0.01$ (\cref{app:delay}). In the networks of \cref{sec:networks}, where the sharpness rises by itself, the overshoot of ResNet-32 grows with the measured sharpening rate (Spearman correlation \DelayResSpearman{} over \DelayResN{} runs, $p=\DelayResSpearmanP$), as predicted, but it is larger than the reduced model gives (\cref{fig:delay} in \cref{app:delay}); we have not identified what sets its size.

\section{Global attraction and an exact quadratic test}\label{sec:global}
\subsection{How much elasticity is enough}
\Cref{thm:main} is local. Whether a trajectory that starts near the unstable minimum actually reaches the orbit is a global question, and with momentum the answer differs between maps that are locally equivalent. To separate the shape of the map from everything else, we use the \emph{Legendre power family}: the speed $s$ solves $s+s^n=z$, which is the velocity of the Lagrangian $L=\tfrac12v^2+\tfrac1{n+1}|v|^{n+1}$ ($n=2$: cubic, $n=3$: quartic). Its elasticity decreases from $1$ to $p=1/n$, and the orbit has the closed form
\begin{equation}\label{eq:powerorbit}
s^{\ast\,n-1}=\tfrac A2-1,\qquad z^\ast=s^\ast\tfrac A2,\qquad \varepsilon^\ast=\frac{A}{2+n(A-2)} ,
\end{equation}
so every member is locally stable for all $A>2$. For each $\beta_1\in\{0,0.5,0.8,0.9,0.95,0.99\}$, $p\in[0.1,0.67]$ and $A\in\{3,\dots,100\}$ (up to $50\times$ the edge step) we iterate \eqref{eq:reduced} from \BasinNZero{} initial conditions ($|y_0|$ log-uniform in $[10^{-4},30]$ in units of $c$, $v_0=0$) and record the share that ends on the symmetric orbit (\cref{tab:basin,fig:basin}). This share is an empirical attraction fraction under this sampling law, not a two-dimensional basin area.

\begin{observation}\label{obs:threshold}
(a) Without momentum, every member of the family attracted all sampled initial conditions at every tested step size. (b) With momentum, the worst-case share over step sizes drops sharply when $p$ falls below a value that grows with $\beta_1$: members with $p\ge1/3$ keep at least $\BasinQuarticEight$ of the starts for $\beta_1\le0.95$, but at $\beta_1=0.99$ the quartic map keeps only \BasinQuarticNineNine{}; members with $p\le1/4$ keep at most \BasinPQuarterNine{} at $\beta_1=0.9$ and \BasinPQuarterNineNine{} at $\beta_1=0.99$. Only $p\ge1/2$ attracted every start in every tested configuration. (c) At $\beta_1=0.9$ the share depends mainly on the orbit elasticity, across the power, relativistic and log-speed families (\cref{fig:basin}a, \cref{fig:quad}d): it is at least \BasinNineHiMin{} whenever $\varepsilon^\ast\ge0.34$ and at most \BasinNineLoMax{} whenever $\varepsilon^\ast<0.25$, with a family-dependent transition in between.
\end{observation}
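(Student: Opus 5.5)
Since \cref{obs:threshold} is an empirical statement about a finite sweep, I would establish it in two layers. The first layer is a reproducible numerical protocol that certifies (b) and (c) on the stated grid. The second is an analytic argument for (a), which is the only part that admits a clean proof, since without momentum the reduced model \eqref{eq:reduced} collapses to a one-dimensional map.

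\textbf{Numerical protocol for (b) and (c).} For each $(\beta_1,p,A)$ I would compute the orbit $(z^\ast,s^\ast,\varepsilon^\ast)$ from \eqref{eq:powerorbit}; for the relativistic and log-speed families I would solve $g(z^\ast)=2/A$ by bisection. I would then iterate \eqref{eq:reduced} from the \BasinNZero{} log-uniform starts with $v_0=0$ for a long transient. A run is declared captured if the last iterates satisfy $|(y_{t+1},v_{t+1})+(y_t,v_t)|$ below a small tolerance and $|v_{t+1}|$ lies within a relative tolerance of $z^\ast$. Trajectories that escape to large $|y|$ are impossible here, because $g\to0$ for these families, so the non-captured runs end on cycles of other periods or on chaotic sets. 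I would record the captured share, take its minimum over $A$, and then read off the quoted thresholds and the $\varepsilon^\ast$-collapse at $\beta_1=0.9$. Robustness would be checked by doubling the transient length and tightening the tolerance. I would also confirm, via the Jury test of \cref{thm:main}(iii), that every non-captured run coexists with a locally stable symmetric orbit, so the failures are genuinely global.

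\textbf{Analytic argument for (a).} With $b=0$ we have $\kappa=1$ and $v'=y$, so the dynamics is $y\mapsto F(y)=y-A\phi(y)$. Here $F$ is odd, and $\phi=s$ solves $s+s^n=z$. This $\phi$ is increasing and concave on $(0,\infty)$, since $\phi'=1/(1+n s^{n-1})$ decreases, and $g=\phi/z\to0$.

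Set $G(y)=-F(y)=A\phi(y)-y$ for $y>0$. This function is concave, satisfies $G(0)=0$ and $G'(0)=A-1>1$, and has a unique positive fixed point $z^\ast$ with $|G'(z^\ast)|=|1-2\varepsilon^\ast|<1$, by \cref{thm:main}.

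Up to the sign flips, the dynamics of $|y|$ is that of $H(y)=|G(y)|$. I would prove the following in order:
\begin{enumerate}
\item For $y$ beyond the second root of $G$, the iterate satisfies $0<F(y)<y$, so the trajectory decreases monotonically into a compact interval $I=[\delta,M]$ that is invariant under $H$.
\item Since $G'(0)>1$, the origin repels.
\item On $I$, $H$ has a single critical point (the maximum of the concave $G$).
\item Compute the Schwarzian $SG=G'''/G'-\tfrac32(G''/G')^2$ from the implicit equation for $s$, and show $SG<0$ away from the critical point. Since $\phi'''$ has a definite sign for $s+s^n=z$, this should reduce to a polynomial inequality in $s$.
\item Apply Singer's theorem. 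With a single critical point and negative Schwarzian, the map has at most one attracting periodic orbit, and that orbit attracts the critical point. The fixed point $z^\ast$ is attracting, so it is that orbit. The remaining non-attracted set has measure zero, which is stronger than the sampled statement.
\end{enumerate}

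\textbf{Main obstacle.} I expect the hardest step to be the sign of the Schwarzian uniformly in $n\in[1.5,10]$ and $A\le100$. The term $\phi'''$ comes from differentiating $\phi'=1/(1+ns^{n-1})$ twice, and negativity may fail near the critical point or for small $n$. If it fails, my fallback is the monotone argument of \citet{ma2022beyond}: show directly that $H\circ H$ has no fixed points in $I$ other than $z^\ast$, using the concavity of $G$ to compare $H\circ H(y)-y$ with its linearization. For non-integer $n$, I would fall back on verifying this by interval arithmetic on the finite grid of tested $(p,A)$.
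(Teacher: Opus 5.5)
For (b) and (c) your protocol is the paper's own support. The paper presents \cref{obs:threshold} as a numerical finding and says in its Limitations that it is not proved. It rests on exactly this sweep: 300 log-uniform starts, $15{,}000$ iterations ($40{,}000$ for $\beta_1\ge0.95$), and capture declared when $|y_{t+1}+y_t|<10^{-6}(1+|y_t|)$ over the last four iterates. For (a) you go further than the paper, but the key step of your analytic route fails.

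\textbf{Step 4 is false as stated.} First, $\phi'''$ has no definite sign. With $h=\phi'=1/(1+ns^{n-1})$ one gets $\phi'''=n(n-1)s^{n-3}h^5\,[n(2n-1)s^{n-1}-(n-2)]$, which changes sign for $n>2$. Second, $SG$ is positive on parts of your interval. For the cubic map, $h=(1+4y)^{-1/2}$, $G'=Ah-1$, $G''=-2Ah^3$ and $G'''=12Ah^5$, so $SG=6Ah^5(Ah-2)/(Ah-1)^2$. This is positive wherever $G'>1$, in particular on a neighbourhood of $0$ for every $A>2$. In general $SG$ has the sign of $-(n-2)(A-1)+n(n+1)(\tfrac A2-1)u-n^2(2n-1)u^2$ with $u=s^{n-1}$:
\begin{itemize}
\item for $n<2$ it is positive near $u=0$, where $\phi$ is not even $C^2$;
\item for $n>2$ and large $A$ it is positive on a window of intermediate $u$ (for $n=3$, $A=10$: $0.49<s<0.91$).
\end{itemize}
So Singer's theorem cannot be applied on $[\delta,M]$ with $\delta$ small. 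Separately, Singer's theorem only bounds the number of attracting cycles. Your measure-zero conclusion would additionally need Ma\~n\'e/Guckenheimer-type results.

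\textbf{The Schwarzian is not needed.} The missing observation is this. A period-two point $a\mapsto b\mapsto a$ of $G(y)=A\phi(y)-y$, of any signs, satisfies $A\phi(a)=a+b=A\phi(b)$. Hence $\phi(a)=\phi(b)$, which is impossible for strictly increasing $\phi$. This is the identity with which the proof of \cref{prop:vrasym} begins, and it is exactly what the non-convex VR map evades.

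By Coppel's theorem, every orbit of a continuous self-map of a compact interval without period-two points converges to a fixed point. Your step 1, taken on a symmetric interval, supplies the compact invariant interval. Since $F^k=(-1)^kG^k$, it remains to list the fixed points of $G$. They are $\pm z^\ast$ and the origin. The origin is repelling ($G'(0)=A-1>1$), so it is reached only from its countable backward orbit. Hence every start outside a countable set converges to the symmetric orbit.

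This proves (a) for every strictly convex radial map with strictly decreasing $g$ and every $2<A<2/g_\infty$, with no interval arithmetic. It is also what \citet{ma2022beyond} give for the even loss with $\ell'(x)=\phi(\lambda x)$, i.e.\ your fallback. Note, though, that ``$H\circ H$ has no other fixed points'' yields convergence only through a theorem such as Coppel's. With momentum the map is two-dimensional and this argument no longer applies, which is consistent with the paper leaving (b) and (c) numerical.
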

Starts that miss the orbit end on period-four, higher-period or chaotic attractors near the orbit amplitude, or, for a few per cent of the largest starts, on large-amplitude cycles. Saturating maps ($\varepsilon\to0$: relativistic, clipping, log-speed) lose the orbit at moderate steps: the relativistic map keeps at most \BasinRelNine{} of the starts at $\beta_1=0.9$ once $A\ge4$. Among the members that were robust everywhere ($p\ge1/2$), the cubic-Legendre map damps large momenta the most, and its speed has a closed form. Robustness has a price: on the orbit $g(z^\ast)=2/A$ with $g(z)\sim z^{p-1}$, so for large $A$ and equal $c$ the amplitude grows as $z^\ast\propto A^{1/(1-p)}$, i.e.\ as $A$ for the relativistic map, $A^{3/2}$ for the quartic and $A^{2}$ for the cubic map (near the edge, for $2<A\lesssim3.7$, the cubic amplitude is the smallest of the three). These statements hold for the sampling law above, whose upper end exceeds the orbit amplitude $|y^\ast|=z^\ast/\kappa$ only at small $A$ and $\beta_1$ (for the cubic map at $A=3$, $|y^\ast|=0.75$ in units of $c$ without momentum, $14$ at $\beta_1=0.9$ and $149$ at $\beta_1=0.99$); larger starts are discussed in \cref{app:reduced}.

\begin{table}[htbp]
\centering\small
\caption{Legendre power maps in the reduced model: worst share of \BasinNZero{} initial conditions that reach the symmetric orbit, over seven step sizes $A\in[3,100]$. $p=1/2$ is the cubic-Legendre map, $p=1/3$ the quartic-Legendre map.}
\label{tab:basin}
\begin{tabular}{lcccccccc}
\toprule
& \multicolumn{8}{c}{worst share over $A\in\{3,4,6,10,20,40,100\}$ for $p=1/n=$}\\
$\beta_1$ & 0.67 & 0.50 & 0.40 & 0.33 & 0.25 & 0.20 & 0.14 & 0.10\\
\midrule
0.0 & \textbf{1} & \textbf{1} & \textbf{1} & \textbf{1} & \textbf{1} & \textbf{1} & \textbf{1} & \textbf{1}\\
0.5 & \textbf{1} & \textbf{1} & \textbf{1} & \textbf{1} & 0.98 & 0.96 & 0.96 & 0.51\\
0.8 & \textbf{1} & \textbf{1} & 0.99 & 0.98 & 0.97 & 0.47 & 0.25 & 0.15\\
0.9 & \textbf{1} & \textbf{1} & \textbf{1} & \textbf{1} & 0.48 & 0.15 & 0.11 & 0.11\\
0.95 & \textbf{1} & \textbf{1} & \textbf{1} & 0.99 & 0.11 & 0.05 & 0.03 & 0.04\\
0.99 & \textbf{1} & \textbf{1} & 0.86 & 0.76 & 0.01 & 0.00 & 0.00 & 0.00\\
\bottomrule
\end{tabular}

\end{table}

\begin{figure}[htbp]
\centering
\includegraphics[width=\linewidth]{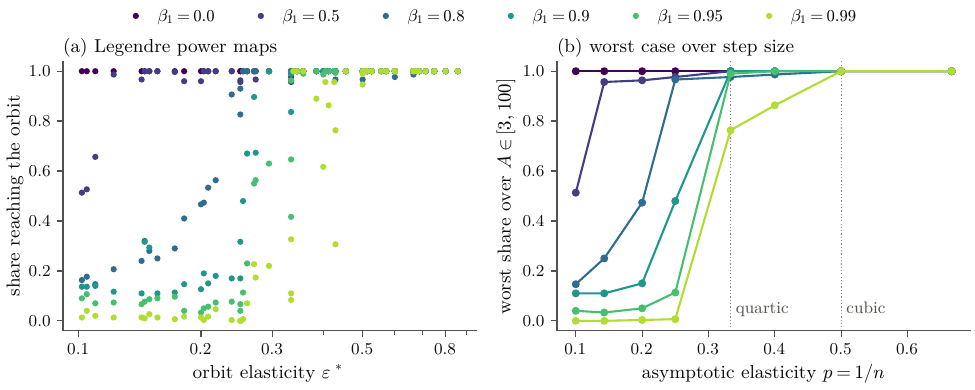}
\caption{Legendre power maps in the reduced model. (a) Share of initial conditions that reach the symmetric orbit versus the orbit elasticity, all $(p,A)$. (b) Worst share over $A\in[3,100]$ versus the asymptotic elasticity $p$.}
\label{fig:basin}
\end{figure}

\subsection{An exact quadratic at network scale}\label{sec:quadratic}
To test the theory in high dimension without self-stabilization, we use linear least squares on 2{,}000 real Fashion-MNIST images \citep{xiao2017fashion}: $L(W)=\tfrac1{2N}\|XW-Y\|_F^2$ with $W\in\R^{785\times10}$ (\QuadParams{} parameters, one tensor). The Hessian is $C\otimes I_{10}$, with $\lambda_{\max}=\QuadLamMax$, and it is constant in time. We use EMA momentum ($\beta_1=0.9$) with $\eta=\mu\cdot38/\lambda_{\max}$, so that $A=2\mu$; the map acts on the tensor norm of the momentum, with $c$ set to $5\%$ of the initial gradient norm of the first seed (used for all seeds); 6{,}000 steps, three seeds, 16 values of $\mu\in[0.5,12]$.

Plain momentum diverges in all \QuadNoneDiv{} runs with $\mu>1$. Above the edge, the quartic and cubic maps reach the symmetric orbit in \QuadQuarticOn{} and \QuadCubicOn{} runs, with $\Reff=1$ and amplitudes $\|m\|/c$ equal to the closed forms of \cref{cor:closed} within relative error \QuadErrQuartic{} and \QuadErrCubic{} (\cref{fig:quad}c). The VR map reaches the orbit in \QuadVROnStable{} runs inside its local stability range $1<\mu<2$ (the others end on coexisting large-amplitude chaotic attractors) and is chaotic beyond it (alternation index \QuadVRAltHigh{} for $\mu\ge2.25$, against $1$ on the orbit). The relativistic map, locally stable everywhere, reached the symmetric orbit in \QuadRelOn{} runs with $\mu\ge2$: \QuadRelPfour{} of them end on a period-four orbit with $\Reff\approx0.5$ and the rest on other periodic orbits, the failure predicted by \cref{obs:threshold}. Below the edge, \QuadSubBad{} quartic and cubic runs (quartic, $\mu=0.9$) ended on a coexisting large-amplitude cycle instead of the minimum, an instance of the large-start caveat.

\begin{figure}[htbp]
\centering
\includegraphics[width=\linewidth]{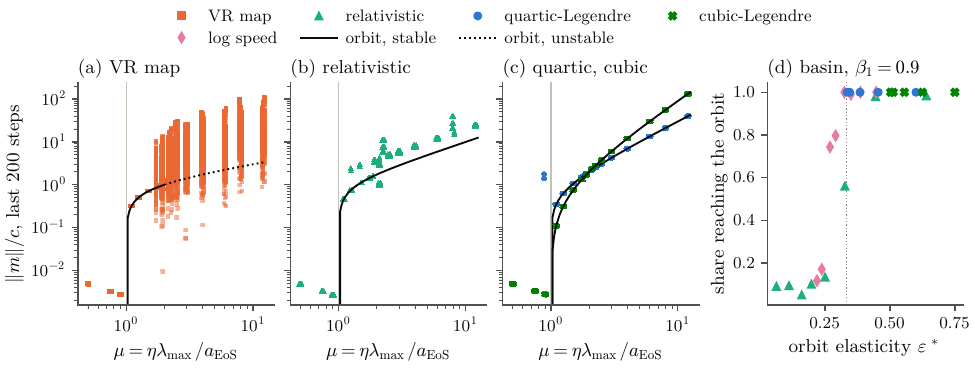}
\caption{Exact quadratic (\QuadParams{} parameters, real data). (a--c) Late momentum norm $\|m\|/c$ versus $\mu=\eta\lambda_{\max}/\aeos$ (markers: last 200 steps of three seeds) against the closed-form orbit (solid: stable, dotted: unstable). (d) Reduced model, $\beta_1=0.9$: share of initial conditions that reach the orbit versus the orbit elasticity, for four families of maps.}
\label{fig:quad}
\end{figure}

\section{Neural-network dynamics}\label{sec:networks}
\subsection{Full-batch CIFAR-10 experiments}\label{sec:fullbatch}
We train with full-batch EMA momentum on a fixed objective: a balanced 1{,}000-image CIFAR-10 subset, a tanh MLP (3072--128--128--10, 10 seeds, $\mu\in\{0.5,1,2,4,8\}$) and a ResNet-32 with batch normalization without running statistics, evaluated on four fixed blocks so that the loss is one deterministic function of the parameters (5 seeds, $\mu\in\{0.5,2,4\}$). We compute $\lambda_{\max}$ at initialization by Lanczos, set $\eta=\mu\cdot38/\lambda_0$, and track the top eigenpair and the per-tensor step multipliers. In a network each tensor $B$ has its own multiplier $g_B$, so we measure the effective ratio along the unit top eigenvector $u$ with the mode-weighted multiplier $\bar g=\sum_B g_B\|u_B\|^2$:
\begin{equation}\label{eq:reffnet}
\Reff=R\,\bar g=R\sum_B g_B\,\|u_B\|^2,\qquad R=\eta\lambda_{\max}/\aeos,
\end{equation}
where $u_B$ is the part of $u$ in tensor $B$. This is the Rayleigh quotient $u^\top\eta GHu/\aeos$ with $G=\mathrm{diag}(g_B)$; if the top eigenvalue is well separated, it equals the top eigenvalue of $\eta GH/\aeos$ to first order in the spread of the multipliers, and it equals $R\,g$ exactly when $u$ lies in one tensor or when all multipliers are equal. $GH$ is not the full linearization of the update with the map: the derivative of \eqref{eq:map} is $g(z)$ across $q$ but $\phi'(z)=g(z)\varepsilon(z)$ along $q$, which is why stability involves the elasticity. $\Reff\approx1$ therefore tests the balance of the effective step on the orbit (\cref{cor:gated}); it does not by itself measure a Floquet multiplier of the network. We compare plain momentum with the quartic, cubic, relativistic and VR maps at two scales ($c=0.05$ absolute; $c=2\|\nabla_BL(\theta_0)\|$ relative). The questions, fixed before the runs, were whether $\Reff$ is close to one for the convex maps while the sharpness is above the edge, whether the saturating and VR maps fail as predicted, whether the end state returns to $R\approx1$, and how the scale $c$ matters. We summarize completed runs by their diagnostics with $R\ge1.1$ after step 100; these two thresholds were selected after a first inspection of the trajectories, and the summary hardly depends on them (\cref{app:networks}). Results on a smaller Fashion-MNIST MLP and the full per-map results are also in \cref{app:networks}.

\paragraph{MLP} (\EMlpRuns{} runs; \cref{fig:e1mlp}). Plain momentum diverges in \EMlpMomDivFour{} runs with $\mu\ge4$ and in \EMlpMomDivTwo{} at $\mu=2$; with the absolute scale, no run with a kinetic map diverged. The convex maps obey the effective-edge law above the plain edge: their points in \cref{fig:e1mlp}b lie on the line $\bar g=1/R$, and over the \EMlpCvxN{} runs that spend time above the edge the per-run median of $\Reff$ is \EMlpCvxMed{} (IQR \EMlpCvxLo{}--\EMlpCvxHi{}). In the run of \cref{fig:e1mlp}a, with the quartic map, the sharpness reaches $1.75\times$ the plain edge, and from then on $\Reff$ stays between $0.99$ and $1.02$. The VR map behaves as its non-convex profile predicts: it is caught on the branch where a larger momentum gives a smaller step, so its points lie below the line, with $\Reff=\EMlpVRMed$ (IQR \EMlpVRLo--\EMlpVRHi), and the sharpness of the network rises far above the edge (\cref{fig:e1mlp}a). In the end the network's own self-stabilization returns the sharpness to the plain edge, and the multiplier to one, for the quartic and relativistic maps; the map therefore acts during the transient. With the relative scale the maps engage only once the oscillation is large: the open triangles near $\bar g=1$ in \cref{fig:e1mlp}b are diagnostics at which the map has not yet acted although the sharpness is above the edge. From $\mu=4$ on these runs fail, while every map with the absolute scale trains at every $\mu$ (\cref{fig:e1mlp}c).

\begin{figure}[htbp]
\centering
\includegraphics[width=\linewidth]{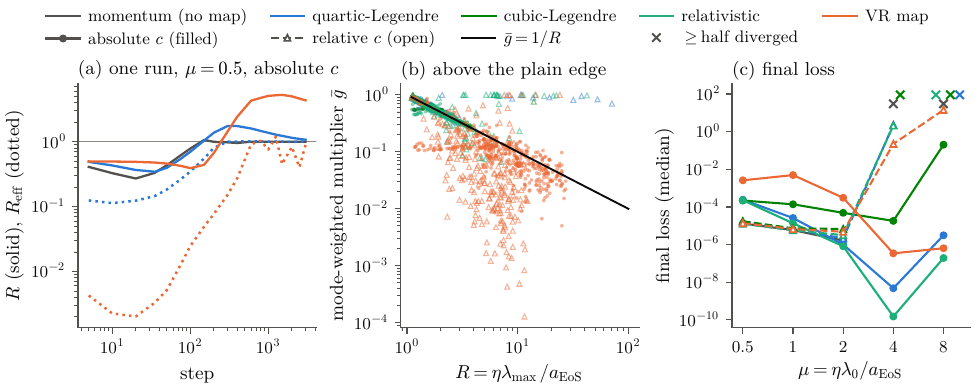}
\caption{Full-batch EMA momentum ($\beta_1=0.9$) on a tanh MLP, 1{,}000 CIFAR-10 images, 10 seeds per setting. Colour: map; filled circles and solid lines: absolute scale $c=0.05$; open triangles and dashed lines: relative scale $c=2\|\nabla_BL(\theta_0)\|$. (a) One seed at $\mu=0.5$, absolute scale: sharpness ratio $R=\eta\lambda_{\max}/\aeos$ (solid) and effective ratio $\Reff$ of \eqref{eq:reffnet} (dotted). Plain momentum stays at the edge, $R\approx1$. With the quartic map the sharpness rises to $1.75$ times the edge while $\Reff$ stays at one; with the VR map the sharpness rises to more than five times the edge while $\Reff$ mostly stays below one. (b) Mode-weighted multiplier $\bar g$ of \eqref{eq:reffnet} against $R$ at every diagnostic with $R\ge1.1$ after step 100, all completed runs. Points on the line $\bar g=1/R$ satisfy $\Reff=1$: the convex maps lie on it, the VR map lies below it (a smaller step than the edge requires), and the open triangles near $\bar g=1$ are relative-scale diagnostics at which the map has not yet engaged, so that the effective step is above the edge. (c) Median final loss versus step size; crosses mark settings in which at least half of the runs diverged.}
\label{fig:e1mlp}
\end{figure}

\paragraph{ResNet-32} (\EResRuns{} runs; \cref{tab:e1res,fig:e1res}). Here the outcome is decided by whether the map engages before the oscillation grows. At $\mu\ge2$, plain momentum either diverged (\EResMomHighDiv{} of \EResMomHighN{} runs) or catapulted towards a nearly constant predictor, ending at losses \EResMomCollLo--\EResMomCollHi{} (predicting the class mean gives $0.09$). All \EResAbsHighN{} runs with a map at the absolute scale, the VR map included, trained, and only \EResRelHighTrained{} of the \EResRelHighN{} runs at the relative scale did. Unlike in the MLP, the effective-edge law does not describe the ResNet transient: the sharpness rises above the edge and falls back within one or two hundred steps, too fast for the oscillation to form (\cref{sec:delay}). In this network the map works as an early brake on the descent momentum, not as a regulator of the edge oscillation.

\begin{table}[htbp]
\centering\small
\caption{ResNet-32 (5 seeds): runs that trained (final loss $<10^{-3}$), collapsed (final loss $>0.03$, near the constant-predictor value $0.09$) or diverged; (+1): one run in between.}
\label{tab:e1res}
\begin{tabular}{lccc}
\toprule
& \multicolumn{3}{c}{trained / collapsed / diverged}\\
method & $\mu=0.5$ & $\mu=2$ & $\mu=4$\\
\midrule
momentum (no map) & 5/0/0 & 0/4/1 & 0/1/4\\
quartic-Legendre, abs. $c$ & 5/0/0 & 5/0/0 & 5/0/0\\
cubic-Legendre, abs. $c$ & 5/0/0 & 5/0/0 & 5/0/0\\
relativistic, abs. $c$ & 5/0/0 & 5/0/0 & 5/0/0\\
VR map, abs. $c$ & 5/0/0 & 5/0/0 & 5/0/0\\
quartic-Legendre, rel. $c$ & 5/0/0 & 0/5/0 & 0/3/2\\
cubic-Legendre, rel. $c$ & 5/0/0 & 2/3/0 & 0/2/3\\
relativistic, rel. $c$ & 5/0/0 & 0/5/0 & 0/4/1\\
VR map, rel. $c$ & 5/0/0 & 0/3/1 (+1) & 0/3/2\\
\bottomrule
\end{tabular}

\end{table}

\begin{figure}[htbp]
\centering
\includegraphics[width=\linewidth]{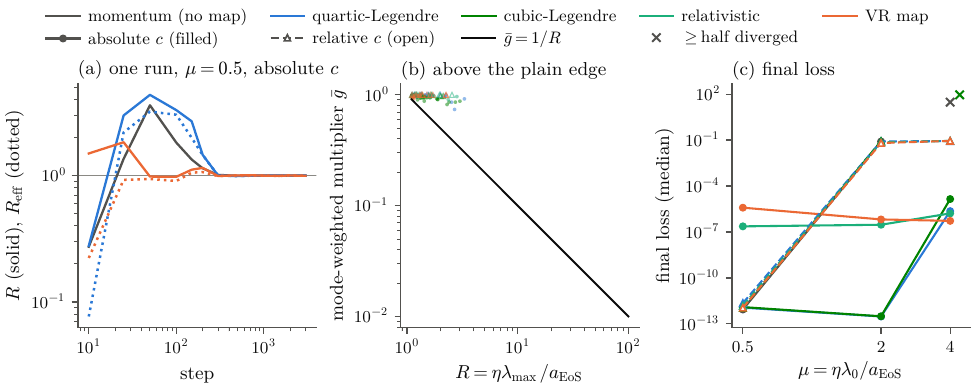}
\caption{Full-batch EMA momentum on ResNet-32 (batch normalization without running statistics), 1{,}000 CIFAR-10 images, 5 seeds; panels as in \cref{fig:e1mlp}. In (b) the multipliers stay near one above the plain edge: the map has not yet engaged.}
\label{fig:e1res}
\end{figure}

\subsection{Adam: maps on the preconditioned direction}\label{sec:adam}
For Adam, a kinetic map can act on the preconditioned direction $q=\hat m/(\sqrt{\hat s}+\epsilon)$ or on the raw first moment $\hat m$. The two differ at the edge:
\begin{proposition}[Preconditioned maps are amplitude-blind]\label{prop:blind}
Let a coordinate of the gradient alternate as $\nabla_t=(-1)^tY$ for longer than the second-moment memory $1/(1-\beta_2)$. Then Adam's $\hat m_t\to\pm\kappa Y$ and $\sqrt{\hat s_t}\to Y$, so $q_t\to\pm\kappa\,Y/(Y+\epsilon)$ with $\kappa=(1-\beta_1)/(1+\beta_1)$: independent of $Y$ for $\epsilon=0$, and approximately so for $Y\gg\epsilon$.
\end{proposition}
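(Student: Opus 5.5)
The plan is to treat the two Adam moments separately as linear EMA recursions driven by the alternating input $\nabla_t=(-1)^tY$, solve each in closed form, and then take the ratio. Recall that Adam sets $m_t=\beta_1 m_{t-1}+(1-\beta_1)\nabla_t$ and $s_t=\beta_2 s_{t-1}+(1-\beta_2)\nabla_t^2$, with bias corrections $\hat m_t=m_t/(1-\beta_1^t)$ and $\hat s_t=s_t/(1-\beta_2^t)$. We count time from the start of the alternation and allow arbitrary initial moments $m_0,s_0$ inherited from earlier training.

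\textbf{Second moment.} The square $\nabla_t^2=Y^2$ is constant, so
\[
s_t-Y^2=\beta_2^t\,(s_0-Y^2).
\]
Hence $s_t\to Y^2$ at the geometric rate $\beta_2^t=e^{t\log\beta_2}$, whose time constant is of order $1/(1-\beta_2)$. If Adam is started at the onset of the alternation, the bias-corrected moment is exactly $\hat s_t=Y^2$ for every $t$. Otherwise the correction factor $1/(1-\beta_2^t)$ tends to one on the same time scale. In either case $\sqrt{\hat s_t}\to Y$, where we take $Y\ge0$ without loss of generality.

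\textbf{First moment.} Look for the periodic particular solution $m_t=(-1)^tM$. Substituting gives $M=-\beta_1M+(1-\beta_1)Y$, so $M=\kappa Y$ with $\kappa=(1-\beta_1)/(1+\beta_1)$. This is the same factor $\kappa$ as in \cref{thm:main}(ii). The homogeneous part decays as $\beta_1^t$. Because $\beta_1<\beta_2$ in the usual setting, the first-moment transient dies out faster than the second-moment one. The correction $1/(1-\beta_1^t)$ also tends to one. Therefore $\hat m_t-(-1)^t\kappa Y\to0$.

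\textbf{Ratio and dependence on $Y$.} Combining the two limits gives $q_t\to\pm\kappa Y/(Y+\epsilon)$. At $\epsilon=0$ this limit is $\pm\kappa$ for every $Y>0$, so it does not depend on the amplitude. For $\epsilon>0$ we have
\[
\frac{Y}{Y+\epsilon}=1-\frac{\epsilon}{Y+\epsilon},
\]
so the dependence on $Y$ is of relative size $\epsilon/Y$, which is small when $Y\gg\epsilon$. The conclusion to state is that any radial map $\psi(q)$ acting on $q$ sees the same argument whatever the oscillation amplitude. In particular, the orbit condition $g(z^\ast)=2/A$ of \cref{thm:main} cannot be met by adjusting the amplitude.

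\textbf{Main obstacle.} The difficulty is interpreting ``longer than the second-moment memory''. Convergence is only asymptotic, and the relevant rate is $\beta_2^t$. I would make this explicit by giving the error bound $|q_t\mp\kappa Y/(Y+\epsilon)|=O(\beta_2^t+\beta_1^t)$, with a constant that depends on $m_0$ and $s_0$. The division by $\sqrt{\hat s_t}+\epsilon$ needs care only when $Y$ is near zero, and that case is excluded by the assumption $Y\gg\epsilon$ or $Y>0$. Finally, I would note that the statement concerns a single coordinate. For a block, the same limit holds coordinatewise, so the vector $q_B$ also becomes independent of the per-coordinate amplitudes.
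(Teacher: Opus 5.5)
Your argument is correct and is the same as the paper's. The paper simply observes that the EMA of the alternating gradient tends to $\pm\kappa Y$, the EMA of its square tends to $Y^2$, and the bias corrections tend to one; your closed-form solutions of the two recursions and the explicit $O(\beta_1^t+\beta_2^t)$ error bound make that one-line proof quantitative (one small point: if Adam started before the alternation, the bias correction uses the global step count rather than your $t$, which changes nothing since it still tends to one).
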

In this stationary limit a map on the preconditioned direction applies an attenuation that does not depend on the amplitude: it can change the oscillation, but it cannot provide feedback on its absolute size, and it reacts mainly to amplitude changes faster than $1/(1-\beta_2)$ steps. A map on the raw first moment (in absolute units, as in VRAdam) retains that information. We compare both in full-batch Adam on the ResNet-32 objective of \cref{sec:networks}. With relativistic maps on $\mathrm{rms}(q_B)$ and on $\|\hat m_B\|$ (\cref{tab:e2}; three seeds, $\eta\in\{0.002,0.01\}$), all \ETwoRuns{} runs end at the adaptive edge (late preconditioned $R$ between \ETwoRlateLo{} and \ETwoRlateHi), and none diverges. The maps act early, while the descent momentum is large (smallest multiplier in the first 100 steps: \ETwoMgateEarlyMin--\ETwoMgateEarlyMax{} for the map on $\hat m$, \ETwoVrEarlyMin--\ETwoVrEarlyMax{} for VRAdam; the map on $q$ starts at \ETwoQgateEarlyMin{} because $q=\pm1$ at the first step), and they are essentially inactive at the edge: in the second half of training the smallest multiplier is \ETwoMgateLateMin{} for the map on $\hat m$ and \ETwoVrLateMin{} for VRAdam. Adam's edge oscillation is therefore small in absolute units, whereas the map on $q$ still reacts to transients (down to \ETwoQgateLateMin), as \cref{prop:blind} implies.

\begin{table}[htbp]
\centering\small
\caption{Full-batch Adam on the ResNet-32 objective, medians over three seeds. $R$ is the preconditioned sharpness ratio $\eta\lambda_{\max}(P^{-1}H)/38$; ``late'' is the median over steps $\ge1500$.}
\label{tab:e2}
\resizebox{\linewidth}{!}{\begin{tabular}{llccccc}
\toprule
$\eta$ & method & peak $R$ & late $R$ & min.\ mult.\ (steps $\ge1500$) & min.\ mult.\ (steps $<100$) & final loss\\
\midrule
0.002 & Adam & 1.29 & 0.99 & 1.000 & 1.00 & $3.4{\times}10^{-6}$\\
0.002 & VRAdam & 1.08 & 0.96 & 1.000 & 0.59 & $2.4{\times}10^{-5}$\\
0.002 & Adam + relativistic map on $\mathrm{rms}(q)$, $c=0.5$ & 1.13 & 1.00 & 0.994 & 0.45 & $5.4{\times}10^{-6}$\\
0.002 & Adam + relativistic map on $\|\hat m\|$, $c=0.05$ & 1.36 & 0.99 & 0.996 & 0.15 & $4.9{\times}10^{-6}$\\
0.01 & Adam & 1.08 & 0.96 & 1.000 & 1.00 & $1.7{\times}10^{-4}$\\
0.01 & VRAdam & 1.26 & 0.94 & 1.000 & 0.30 & $3.1{\times}10^{-6}$\\
0.01 & Adam + relativistic map on $\mathrm{rms}(q)$, $c=0.5$ & 1.19 & 0.96 & 0.918 & 0.45 & $1.6{\times}10^{-4}$\\
0.01 & Adam + relativistic map on $\|\hat m\|$, $c=0.05$ & 1.30 & 0.93 & 0.998 & 0.15 & $3.1{\times}10^{-5}$\\
\bottomrule
\end{tabular}
}
\end{table}

In minibatch training with Adam (ResNet-32, batch 1{,}000), the preconditioned sharpness of the training batch stays well below the full-batch threshold, while a preconditioned batch sharpness scales as $1/\eta$, consistent with dynamics limited by stochastic stability \citep{andreyev2024eoss}; see \cref{app:networks}.

\section{Optimizer benchmarks and effective-rate controls}\label{sec:practice}
\begin{algorithm}[htbp]
\caption{Kinetic Adam (per-tensor map)}\label{alg:kadam}
\begin{algorithmic}[1]
\Require lr $\eta_t$, $(\beta_1,\beta_2)$, $\epsilon$, weight decay $\omega$, step multiplier $g$, scale $c$, source $\in\{q,\hat m\}$
\For{each step and each parameter tensor $\theta_B$ with gradient $\nabla_B$}
\State $m_B\gets\beta_1m_B+(1-\beta_1)\nabla_B$; $s_B\gets\beta_2s_B+(1-\beta_2)\nabla_B^2$; $q_B\gets\hat m_B/(\sqrt{\hat s_B}+\epsilon)$
\State $r_B\gets\mathrm{rms}(q_B)/c$ (source $q$) or $\mathrm{rms}(\hat m_B)/c$ (source $\hat m$)
\State $\theta_B\gets\theta_B(1-\eta_t\omega)-\eta_t\,q_B\,g(r_B)$
\EndFor
\end{algorithmic}
\end{algorithm}
\Cref{alg:kadam} adds one reduction per tensor to AdamW; with $g\equiv1$ it is AdamW exactly. The map can wrap any optimizer in update space; we also use it on Muon. \Cref{prop:blind} favours a map on the raw first moment; in the CNN, character-level and ResNet-18 benchmarks below, the Kinetic Adam variants act on $q$ (with a fixed or an EMA scale), and in the GPT and grokking studies on $\hat m$.

\paragraph{Benchmarks.} With an equal tuning budget (10 random-search trials per method, three fresh seeds; \cref{app:practice}), every Adam variant with a fixed scale, VRAdam and the preconditioned curvature controller end close to AdamW (test loss $0.266$--$0.280$ against $0.275\pm0.004$ on a Fashion-MNIST CNN, and $1.70$--$1.74$ on a character-level transformer), and Muon with a quartic map ends close to Muon ($0.259$ against $0.261$ on the CNN). VRAdam's one distinct advantage is at very large learning rates: on an MLP it is the only method that still learns at about $30\times$ AdamW's best learning rate (\cref{fig:lrsens} in \cref{app:practice}). This is consistent with braking on the absolute momentum magnitude (\cref{prop:blind}), not from velocity dynamics.

\paragraph{ResNet-18 on CIFAR-10 with effective-learning-rate controls.} After a random search (\cref{app:practice}), Kinetic Muon had the highest accuracy, and the quartic Adam variants raised accuracy but also cross-entropy relative to AdamW, consistent with a larger effective step. We therefore reran the selected configurations next to controls with matched effective learning rates (\cref{tab:e3}; \EThSeeds{} seeds shared by all rows, so differences are paired). These are the seeds of the tuned comparison, so the tuned rows repeat those runs up to non-deterministic GPU arithmetic and are not an independent replication. Kinetic Muon beats tuned Muon, by $\EThKinVsMuonAcc\pm\EThKinVsMuonAccSd$ pp accuracy and $\EThKinVsMuonCE\pm\EThKinVsMuonCESd$ cross-entropy ($p=\EThKinVsMuonAccP$ and $\EThKinVsMuonCEP$, paired $t$-test). Against Muon run at Kinetic Muon's effective learning rates, the differences shrink to $\EThKinVsEffAcc\pm\EThKinVsEffAccSd$ pp and $\EThKinVsEffCE\pm\EThKinVsEffCESd$ (95\% intervals \EThKinVsEffAccLo{} to \EThKinVsEffAccHi{} pp and \EThKinVsEffCELo{} to \EThKinVsEffCEHi; $p=\EThKinVsEffAccP$, $\EThKinVsEffCEP$); in point estimate, the change of learning rate alone gives about half of the cross-entropy difference ($\EThEffVsMuonCE$, $p=\EThEffVsMuonCEP$). For Adam, the global quartic map is not resolved from AdamW at three times its tuned learning rate ($\EThKqVsThreeAcc$ pp, $\EThKqVsThreeCE$ cross-entropy, $p>\EThKqVsThreeMinP$). Matching the effective learning rates thus removes most of the difference for the Adam map and, in point estimate, about half of it for Kinetic Muon; the remaining differences are not resolved by five seeds, so these experiments support a substantial learning-rate contribution without establishing equivalence. All $p$-values are unadjusted paired $t$-tests for the comparisons the controls were designed for.

\begin{table}[htbp]
\centering\small
\caption{ResNet-18 / CIFAR-10, effective-learning-rate controls: mean $\pm$ s.d.\ over \EThSeeds{} seeds, and paired differences to the reference row of each block ($^\ast$: $p<0.05$, paired $t$-test).}
\label{tab:e3}
\resizebox{\linewidth}{!}{\begin{tabular}{lcccc}
\toprule
configuration & test accuracy (\%) & test CE & $\Delta$acc (pp) & $\Delta$CE \\
\midrule
Muon, tuned & 94.02 $\pm$ 0.18 & 0.351 $\pm$ 0.006 & ref. & ref.\\
Kinetic Muon (quartic EMA map), tuned & 94.19 $\pm$ 0.17 & 0.335 $\pm$ 0.008 & $+0.17$ $\pm$ 0.13$^\ast$ & $-0.016$ $\pm$ 0.011$^\ast$\\
Muon at Kinetic Muon's effective lr ($\times0.66$) & 94.09 $\pm$ 0.07 & 0.344 $\pm$ 0.008 & $+0.07$ $\pm$ 0.20 & $-0.008$ $\pm$ 0.007\\
Muon at Kinetic Muon's nominal lr & 94.01 $\pm$ 0.13 & 0.362 $\pm$ 0.009 & $0.00$ $\pm$ 0.22 & $+0.010$ $\pm$ 0.013\\
\midrule
AdamW, tuned & 92.36 $\pm$ 0.19 & 0.338 $\pm$ 0.010 & ref. & ref.\\
AdamW at $3\times$ tuned lr & 92.85 $\pm$ 0.18 & 0.360 $\pm$ 0.007 & $+0.50$ $\pm$ 0.23$^\ast$ & $+0.022$ $\pm$ 0.015$^\ast$\\
K-Adam quartic, global, tuned & 92.91 $\pm$ 0.21 & 0.368 $\pm$ 0.007 & $+0.55$ $\pm$ 0.28$^\ast$ & $+0.030$ $\pm$ 0.014$^\ast$\\
\bottomrule
\end{tabular}
}
\end{table}

\paragraph{Small GPT.} In a \EFiveParams M-parameter GPT-2-style model trained on FineWeb-Edu \citep{penedo2024fineweb} (\cref{app:practice}), every method is best at the same learning rate, where the final validation losses of AdamW, clipping, VRAdam and the cubic map span only \EFiveBestSpread{} nats, against a spread of \EFiveAdamwSeedSpread{} over initialization seeds for AdamW. At about three times that learning rate AdamW fails, clipping largely avoids the failure, the cubic map partly does, and VRAdam does not. With two seeds per setting, no kinetic map improves the best loss beyond seed noise or extends the usable range of learning rates beyond what clipping does.

\section{Discussion and conclusions}\label{sec:discussion}
\subsection{Connections to nonlinear dynamics}\label{sec:connections}
The edge of stability is a phenomenon of nonlinear dynamics, and \cref{sec:delay} applied one classical result to it. Here we note further connections to nonlinear dynamics and report a test on grokking.

\paragraph{Weak dissipation and capture into resonance.} The one-step Jacobian of \eqref{eq:reduced} has determinant $b=\beta_1$ everywhere. In the variables $(v,u)$ with $u=(1-b)y$ the map reads $v'=bv+u$, $u'=u-(1+b)A\,\phi(v')$; as $\beta_1\to1$ at fixed $A$ it becomes a weakly dissipative perturbation of the area-preserving map $v'=v+u$, $u'=u-2A\,\phi(v')$, and the symmetric orbit becomes a weakly damped elliptic point (\cref{app:proofs}). In area-preserving maps, elliptic points are surrounded by chains of resonance islands, and weak dissipation can turn islands into competing attractors. Which attractor a trajectory reaches is then expected to be a problem of capture into resonance, as in slowly varying or weakly dissipative Hamiltonian systems, where capture probabilities are set by phase-space fluxes \citep{neishtadt1975separatrix,arnold2006mathematical} and, for maps with a constant Jacobian, by the areas of the islands \citep{chirikov1981,lieberman1985}. Capture into resonance governs, for example, the acceleration of charged particles by waves \citep{itin2000captures,itin2002trapping,neishtadt2011surfatron,artemyev2018trapping}, the destruction of adiabatic invariance in slowly perturbed elliptic billiards \citep{itin2003elliptic}, Fermi acceleration in time-dependent rectangular billiards \citep{itin2012fermi} and resonant phenomena in the classical three-body Coulomb problem \citep{itin2003coulomb}. We expect the period-four and higher-period attractors of \cref{sec:global} to be of this kind, and the growth with $\beta_1$ of the elasticity required for global attraction to reflect the weaker dissipation; a quantitative treatment is left for future work.

\paragraph{Averaged force.} Averaging the loss over a period-two oscillation of amplitude $\delta$ along the top eigenvector $u$ gives $\tfrac12[L(\theta+\delta u)+L(\theta-\delta u)]=L(\theta)+\tfrac12\delta^2\lambda(\theta)+O(\delta^4)$: the fast oscillation exerts a force $-\tfrac12\delta^2\nabla\lambda$ on the slow variables and pushes them towards lower sharpness. This is the averaged form of self-stabilization \citep{damian2023selfstab}, analogous to the ponderomotive force and to Kapitza's pendulum, where a fast vibration creates an effective potential for the slow motion \citep{kapitza1951,landau1976mechanics}. A kinetic map sets $\delta$, and hence the strength of this force (\cref{app:reduced}).

\paragraph{Other analogies.} The EoS has the structure of self-organized criticality \citep{bak1987soc}, with progressive sharpening as the slow drive and catapults and self-stabilization as the fast relaxation; \citet{katsnelson2021soc} find that learning dynamics are generically attracted to such a critical state, and in this language a kinetic map moves the critical point from $\eta\lambda=\aeos$ to $\eta\lambda g=\aeos$. A kinetic map is also a governor on the optimizer's speed, in the sense of Maxwell's analysis of feedback control \citep{maxwell1868governors}: a convex map throttles less than proportionally, whereas VRAdam's non-convex map throttles more than proportionally beyond $z=1$, so that a larger drive gives a smaller speed. Finally, for networks with normalization layers \citet{kodryan2022three} find convergence, chaotic-equilibrium and divergence regimes as the effective learning rate grows, and \citet{lobacheva2021periodic} describe periodic destabilizations under weight decay; the reduced model has analogous regimes, and with the convex maps we tested ($g_\infty=0$) the divergence regime of plain momentum is replaced by bounded oscillations.

\paragraph{Grokking.} Networks trained on small algorithmic tasks can generalize long after fitting the training set \citep{power2022grokking,nanda2023progress}. With adaptive optimizers this late generalization coincides with norm growth and loss spikes (``slingshots'') \citep{thilak2022slingshot}, and \citet{prieto2025grokking} relate grokking to numerical instability, so edge-of-stability dynamics could be involved. Clipping and kinetic maps on $\hat m$ can damp such spikes, which gives a direct test: if the spikes drove grokking, damping them should delay it. In modular addition with weight decay (\cref{app:grok}), clipping and the cubic map with $k=1$ lowered the median of the largest training loss after memorization about fourfold (not on every seed) without a significant delay of grokking, and larger spikes did not go with earlier grokking. \emph{Conclusion:} in this setting, reducing the measured loss peaks did not consistently delay grokking, and the large spikes seen without weight decay did not produce it; with three seeds per method we see no significant effect of the maps on when grokking occurs. Beyond this negative result we found no link between grokking and the mechanisms studied here.

\subsection{Limitations}
The reduced model freezes Adam's preconditioner; minibatch noise, the scale invariance of batch-normalized networks and the large-amplitude cycles reachable from large starts are outside the theorem, and \cref{obs:threshold} is numerical, not proved. In nonlinear networks the sharpness self-stabilizes to the plain edge (except with the cubic map in our MLP, which ends above it, at $R$ up to \EMlpCubicAbsEndHi{} for $\mu\ge1$), so the orbit theory governs the transient, not the end state. The basin results are for sampled initial conditions on a grid of step sizes and momentum values, and the robustness of the cubic map is a statement about the reduced model: in the MLP it trained worse than the quartic and relativistic maps. The controlled delay test was run near a well-trained state, where the top mode is linear to high accuracy; in ordinary ResNet-32 training the overshoot follows the predicted trend but is larger than the reduced model gives. In ResNet-32 the transient is too fast for the orbit to form, so maps at the absolute scale help there as early brakes, not through the effective-edge law. As optimizers, kinetic maps were within noise of tuned baselines on our small benchmarks, and on ResNet-18 controls with matched effective learning rates reproduce most of the effect of an Adam map and about half of the gain of Kinetic Muon, with the remainder not resolved by five seeds.

\subsection{Recommendations}
(i) Use a convex map, and choose its elasticity knowing the trade-off: a high asymptotic elasticity (cubic, $p=1/2$) gave the most reliable attraction to the orbit in the reduced model, a low one a smaller oscillation at large steps; in our MLP at $\mu\ge1$ the quartic and relativistic maps trained better than the cubic one, while in ResNet-32 the differences were small. (ii) Apply the map per tensor and to the raw first moment. (iii) Choose $c$ small enough that the map engages before the oscillation grows: it sets the oscillation amplitude, and in our MLP study a scale relative to the initial gradient norm engaged too late. (iv) Report step multipliers and effective learning rates with any comparison.

\subsection{Conclusions}
We have analysed how kinetic maps change momentum dynamics after a minimum loses stability. In the reduced model, the elasticity of the speed profile gives an exact criterion for the local stability of the symmetric period-two orbit, and on that orbit the step multiplier compensates exactly for the excess curvature. VRAdam's damper violates this criterion from twice the edge step on, while the Legendre-dual map of its own Lagrangian satisfies it at every step. Numerical basin experiments show that the local result does not determine which attractor a trajectory reaches: with heavy momentum this also requires elasticity at large momentum.

Our central dynamical result concerns the gradual passage through the edge. In controlled network experiments, the linear recurrence on the measured curvature path reproduces the delayed loss of stability without fitted parameters, and the exit points follow the entry--exit relation of Neishtadt's theory of slowly swept bifurcations, including the dependence on the entry point and its suppression by noise. In ordinary full-batch training the overshoot grows with the sharpening rate, as predicted, but it is larger than the reduced model gives, and we have not identified what sets its size.

These dynamical results do not imply a general improvement in predictive performance. In networks the map acts in two different ways: it holds a settled oscillation at the effective edge (MLP), or it brakes the early descent (ResNet-32). On ResNet-18, controls with matched effective learning rates reproduce much of the apparent gains: most of the difference for an Adam map and about half of it for Kinetic Muon. Separating these mechanisms, and stating where each explanation is supported, is a main aim of this work. The weakly dissipative limit $\beta_1\to1$, in which capture into resonance should decide which attractor training reaches, is left for future work.

\subsubsection*{Acknowledgements}
This work used the Dutch national e-infrastructure with the support of the SURF Cooperative using grant no.~EINF-19390. Access to the supercomputer Snellius was provided under the NWO call ``Computation time national computer systems 2026''.

\bibliographystyle{tmlr}
\bibliography{refs}

\appendix
\section{Proofs for Section~\ref{sec:theory}}\label{app:proofs}
\paragraph{Setting.} The map $T(y,v)=(y-a\phi(bv+(1-b)y),\,bv+(1-b)y)$ is odd.
\paragraph{\Cref{thm:main}(i).} At the minimum, $J_0=\begin{psmallmatrix}1-a(1-b)&-ab\\1-b&b\end{psmallmatrix}$; the Jury conditions reduce to $A<2$. Its trace is $1+b-a(1-b)$ and its determinant $b$, which gives the growth rate $\Lambda(R)$ of \cref{sec:delay}.
\paragraph{\Cref{thm:main}(ii).} $T(p)=-p$ gives $v=-\kappa y$, $z=\kappa y$ and $g(z)=2/A$, with a unique root for $g_\infty<2/A<1$ (uniqueness among symmetric orbits only; asymmetric two-orbits exist for the VR map when $A>4$, \cref{prop:vrasym}).
\paragraph{\Cref{thm:main}(iii).} $DT$ at $\pm p$ is the same $J$ because $\phi'$ is even, so $DT^2=J^2$; $\operatorname{tr}J=(1+b)(1-2\varepsilon^\ast)$, $\det J=b$; for $0\le\det<1$, the spectral radius of $J$ is below one if and only if $|\operatorname{tr}J|<1+\det J$. At $\varepsilon^\ast=0$ the eigenvalues are $1$ and $b$.
\paragraph{\Cref{cor:gated}.} Immediate from $g(z^\ast)=2/A$ and $A=\kappa\eta\lambda=2\eta\lambda/\aeos$.
\paragraph{\Cref{cor:closed}.} VR: $g=1/(1+z^2)$. Relativistic: $g=(1+z^2)^{-1/2}$. Quartic: $z=s+s^3$, $g=1/(1+s^2)$, $\varepsilon=(1+s^2)/(1+3s^2)$. Cubic: $z=s+s^2$, $g=1/(1+s)$, $\phi'=1/(1+2s)$, $\varepsilon=(1+s)/(1+2s)=A/(2A-2)$.
\paragraph{\Cref{prop:blocks}.} On the orbit the multiplier is the constant $2/(\eta\lambda_{\max})$, and $\partial\|u\|/\partial u_i=0$ at $u_i=0$.
\paragraph{\Cref{prop:blind}.} For an alternating input the EMA $m_t$ converges to $\pm(1-\beta_1)Y/(1+\beta_1)$ and $s_t$ to $Y^2$; the bias corrections tend to one.
\paragraph{Orbit multipliers (\cref{sec:connections}).} From (iii), the characteristic polynomial $\nu^2-(1+b)(1-2\varepsilon^\ast)\nu+b$ of the one-step Jacobian has complex roots of modulus $\sqrt b$ whenever $(1+b)^2(1-2\varepsilon^\ast)^2<4b$ (for $b=0.9$: $0.0007<\varepsilon^\ast<0.9993$), with $\cos\theta=(1+b)(1-2\varepsilon^\ast)/(2\sqrt b)$; $\theta=\pi/2$ at $\varepsilon^\ast=1/2$. The two-step (Floquet) multipliers are $b\,e^{\pm2i\theta}$. With $u=(1-b)y$: $v'=bv+(1-b)y=bv+u$ and $u'=(1-b)y'=u-(1-b)a\phi(v')=u-(1+b)A\phi(v')$.

\begin{proposition}[Asymmetric two-cycles of the VR map]\label{prop:vrasym}
Let $\phi(z)=z/(1+z^2)$ and $b\in[0,1)$. Besides the symmetric orbit, \eqref{eq:reduced} has two-cycles whose momenta after the two steps are $z_1>0$ and $-z_2<0$ with $z_1\neq z_2$ if and only if $A>4$. On them $z_1z_2=1$ and $(z_1+z_1^{-1})^2=A$, so $z_1=\tfrac12(\sqrt A\pm\sqrt{A-4})$; the two signs give a pair of mirror images under $(y,v)\mapsto(-y,-v)$, which branches off the symmetric orbit ($z_1=z_2=1$) at $A=4$. The two-step Jacobian on these cycles has determinant $b^2$ and trace $1+b^2-2(1+b)^2(A-4)$, so they are strictly linearly stable if and only if $4<A<4+(1+b^2)/(1+b)^2$. Without momentum this is $4<A<5$; the nontrivial multiplier $9-2A$ passes through $-1$ at $A=5$, where the asymmetric cycles period-double.
\end{proposition}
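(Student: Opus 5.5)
The plan is to reduce the two-cycle conditions to scalar equations in the two post-step momenta, and then compute the two-step Jacobian by the same route as in \cref{thm:main}(iii). Label a two-cycle $(y_0,v_0)\mapsto(y_1,v_1)\mapsto(y_0,v_0)$ of \eqref{eq:reduced}. Its post-step momenta are $v_1=z_1$ and $v_0=-z_2$. The four equations are
\[
z_1=-bz_2+(1-b)y_0,\quad -z_2=bz_1+(1-b)y_1,\quad y_1=y_0-a\phi(z_1),\quad y_0=y_1+a\phi(z_2),
\]
using that $\phi$ is odd. The last two give $\phi(z_1)=\phi(z_2)$.

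Subtracting the momentum equations gives $(1+b)(z_1+z_2)=(1-b)(y_0-y_1)=(1-b)a\phi(z_1)$, that is, $z_1+z_2=A\phi(z_1)$. Adding them gives $y_0+y_1=z_1-z_2$, which fixes the remaining freedom. So a two-cycle is equivalent to a pair $(z_1,z_2)$ satisfying $\phi(z_1)=\phi(z_2)$ and $z_1+z_2=A\phi(z_1)$; the positions are then recovered uniquely.

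For the VR map with $z_1\neq z_2$, the relation $z_1/(1+z_1^2)=z_2/(1+z_2^2)$ factors as $(z_1-z_2)(1-z_1z_2)=0$, so $z_1z_2=1$. Substituting $z_2=1/z_1$ into the second equation gives $(1+z_1^2)^2=Az_1^2$, i.e.\ $(z_1+z_1^{-1})^2=A$. This has real positive roots $z_1=\tfrac12(\sqrt A\pm\sqrt{A-4})$, which are distinct from $z_2$ if and only if $A>4$. The two roots are swapped by $z_1\leftrightarrow z_2$, which is the mirror image under $(y,v)\mapsto(-y,-v)$. As $A\downarrow4$ both roots tend to $1$, the symmetric orbit of \cref{cor:closed}. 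This settles existence and the branching.

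For stability, the one-step Jacobians are $J_i=\begin{psmallmatrix}1-ad_i(1-b)&-ad_ib\\1-b&b\end{psmallmatrix}$ with $d_i=\phi'(z_i)$; the sign of the second momentum does not matter because $\phi'$ is even. Each has $\det J_i=b$, so $\det(J_2J_1)=b^2$. I will set $D_i=Ad_i$, so that $ad_i=(1+b)D_i/(1-b)$, and expand
\[
\operatorname{tr}(J_2J_1)=1+b^2-(1+b)^2(D_1+D_2)+(1+b)^2D_1D_2 .
\]
The key simplification uses $\phi'(z)=(1-z^2)/(1+z^2)^2$ together with $(1+z_1^2)^2=Az_1^2$. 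This gives $D_1=z_1^{-2}-1$, and by symmetry $D_2=z_1^2-1$. Since $z_1^2+z_1^{-2}=A-2$, it follows that $D_1+D_2=A-4$ and $D_1D_2=4-A$, so the trace is $1+b^2-2(1+b)^2(A-4)$.

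Since the determinant is $b^2\in[0,1)$, the Jury criterion $|\operatorname{tr}|<1+b^2$ characterizes strict stability. The upper inequality holds exactly when $A>4$, and the lower one exactly when $A<4+(1+b^2)/(1+b)^2$. At $b=0$ the trace equals the nontrivial multiplier $9-2A$, which passes through $-1$ at $A=5$, signalling period doubling.

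The main obstacle is the trace algebra. The plan is to avoid raw rational expressions in $z_1$ by passing to the variables $D_i$ and using the cycle identity to linearize them. One must also check that no spurious branch with $z_1=z_2$ other than the symmetric orbit enters the count.
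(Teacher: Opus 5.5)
Your proposal is correct and follows the paper's proof essentially step for step: you reduce the cycle to $\phi(z_1)=\phi(z_2)$ and $z_1+z_2=A\phi(z_1)$, factor to get $z_1z_2=1$, and evaluate the trace of $J_2J_1$ via the cycle identity $(1+z_1^2)^2=Az_1^2$, where your $D_1=z_1^{-2}-1$, $D_2=z_1^2-1$ are the paper's $A(d_1+d_2)=A-4$ and $A^2d_1d_2=4-A$ written out termwise. Your closing worry about a $z_1=z_2$ branch is moot, since the statement assumes $z_1\neq z_2$, and in any case that branch gives $y_0+y_1=0$ and $g(z)=2/A$, which is exactly the symmetric orbit.
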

\begin{proof}
For a two-cycle $(y_1,v_1)\mapsto(y_2,v_2)\mapsto(y_1,v_1)$ write $v_2=z_1$, $v_1=-z_2$. Adding the two $y$-updates gives $\phi(z_1)=\phi(z_2)$, and for the VR map this is equivalent to $(z_1-z_2)(1-z_1z_2)=0$, so $z_1z_2=1$ when $z_1\neq z_2$. Subtracting the two $v$-updates gives $(1+b)(v_2-v_1)=(1-b)(y_1-y_2)=(1-b)a\phi(z_1)$, i.e.\ $z_1+z_2=A\phi(z_1)$, which with $z_2=1/z_1$ becomes $(z_1+z_1^{-1})^2=A$; it has a solution $z_1\neq1$ if and only if $A>4$. With $\phi'(z)=(1-z^2)/(1+z^2)^2$ and $z_1+z_1^{-1}=\sqrt A$, $d_1=\phi'(z_1)$ and $d_2=\phi'(z_2)$ satisfy $d_1+d_2=(A-4)/A$ and $d_1d_2=-(A-4)/A^2$. The product of the one-step Jacobians $J(d)=\begin{psmallmatrix}1-a(1-b)d&-abd\\1-b&b\end{psmallmatrix}$ has determinant $b^2$ and trace $1+b^2-(1+b)^2\big[A(d_1+d_2)-A^2d_1d_2\big]=1+b^2-2(1+b)^2(A-4)$, and $|\operatorname{tr}|<1+\det$ gives the stated interval.
\end{proof}

\section{Additional results for the reduced model}\label{app:reduced}
\begin{figure}[t]
\centering
\includegraphics[width=\linewidth]{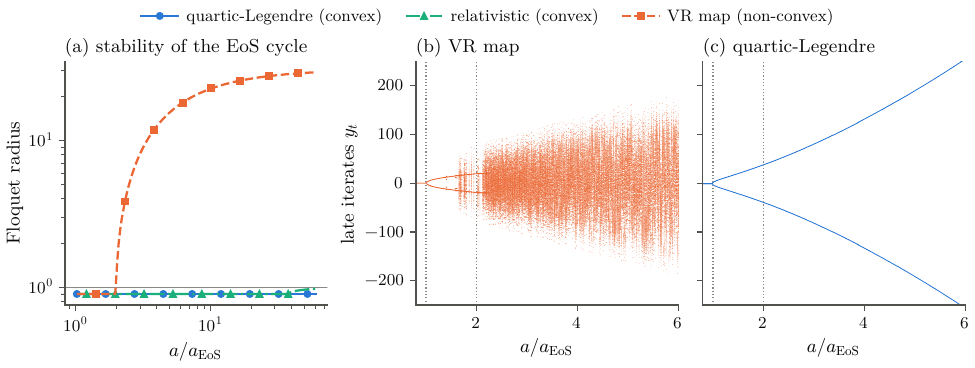}
\caption{EMA momentum, $\beta_1=0.9$. (a) Floquet radius of the symmetric EoS orbit versus $a/\aeos$. (b,c) Late iterates.}
\label{fig:momentum}
\end{figure}

\paragraph{Large starts.} The basin statements of \cref{sec:global} hold for the sampling law stated there, whose starts are mostly smaller than the orbit (at most $29\%$ of them exceed $|y^\ast|$, at $\beta_1=0$ and $A=3$), as for trajectories that leave the minimum as it loses stability. Larger starts can lock into large-amplitude cycles for every map we tested: for the cubic map the smallest such start on a logarithmic grid lies between \LargeICCubicMin{} and \LargeICCubicMax{} times the orbit amplitude $|y^\ast|$ ($\beta_1\in\{0.9,0.99\}$, $A\in\{3,6,20\}$).

\paragraph{Global versus per-block maps.} \Cref{fig:blocks} illustrates \cref{prop:blocks} on a block-diagonal quadratic.

\begin{figure}[t]
\centering
\includegraphics[width=0.66\linewidth]{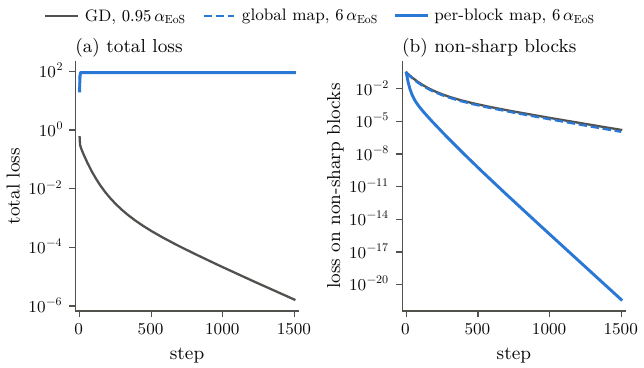}
\caption{Block-diagonal quadratic, quartic map at $6\times$ the edge step: global versus per-block maps.}
\label{fig:blocks}
\end{figure}

\paragraph{Amplitude-controlled self-stabilization.} The map bounds the oscillation instead of removing it. In the model of \citet{damian2023selfstab}, $f=\tfrac12(S_0+\gamma w)x^2$, the momentum amplitude of the oscillation is $c\,z^\ast(A)$ and the sharpness decreases at a rate $\propto c^2$ (measured slopes $\SelfStabSlopeQ$, $\SelfStabSlopeR$, $\SelfStabSlopeV$ for the quartic, relativistic and VR maps; \cref{fig:selfstab}). This is the averaged force of \cref{sec:connections}. The VR map stalls: beyond its stable range the iterate spends long stretches at large momenta where $\phi$ is small, and the sharpness stays far above the edge ($S=\SelfStabVRFinal$ against $\approx2$).

\begin{figure}[t]
\centering
\includegraphics[width=\linewidth]{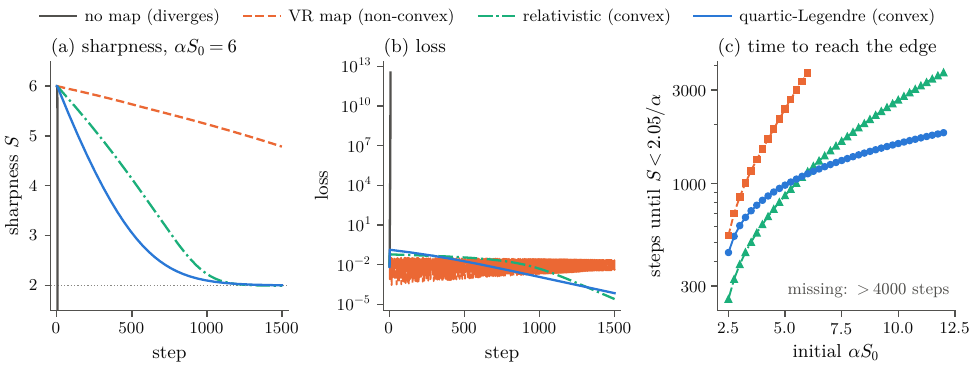}
\caption{Self-stabilization model with $\alpha=1$ (edge at $S=2$), $c=0.3$. (a) Sharpness, (b) loss at $\alpha S_0=6$; (c) steps until $S<2.05$.}
\label{fig:selfstab}
\end{figure}

\section{Delayed loss of stability: details}\label{app:delay}
\paragraph{Controlled ramps.} We ran \RampKickRuns{} ramps. At the slowest rate we also tracked the reference trajectory along $u$, and we exclude the \RampKickExcluded{} runs at $\rho=0.001$ and $R_{\rm in}=0.5$, in which the reference trajectory's own period-two oscillation, driven by training itself, grows to the size of the kick before the kicked deviation returns. At $\rho=0.001$ the largest difference between network and model reaches \RampKickSeriesErrMax\%, near the bottom of the decay just below the edge; in a test run (seed 2, $R_{\rm in}=0.7$) it fell from \RampTestErrA\% to \RampTestErrB\% with a kick of $10^{-6}$ instead of $10^{-4}$, and to \RampTestErrC\% when the sharpness was also tracked at every step instead of every fifth, with the same exit point in all three cases. The sharpness itself drifts by up to \RampLamDrift\% during a ramp. At faster sweeps the exit also depends on the phase of the rotation below the edge at the moment of crossing, so that $R_{\rm out}$ varies non-monotonically with $R_{\rm in}$; the model reproduces this run by run. On an exact linear ramp (thin lines in \cref{fig:ramp}b) the model differs from the network by up to \RampIdealDiff, because the imposed ramp is only approximately linear. For entries closer to the edge than $R_{\rm in}=0.9$ the kick has barely decayed when the edge is crossed, and its transient growth below the edge (by a factor of three to five, since the linear map is not normal) decides the exit. In the runs with noise, the count of exits within the model's 10--90\% range includes runs at $\sigma=10^{-5}$ in which both exit on the first step past the edge. Sharpness and loss after pretraining: \RampLamLo--\RampLamHi{} and \RampLossLo--\RampLossHi.

\paragraph{Reduced model with a kinetic map.} With a map, the exit leads onto the period-two orbit, and $\Reff$ returns to one once the orbit has formed. Propagating the scalar variance closure $s_{t+1}^2=e^{2\Lambda(R_t)}s_t^2+\sigma^2$ of \cref{sec:delay} estimates when the amplitude reaches the orbit. We sweep $R$ linearly from $0.5$ at rate $\rho$ per step in the reduced model with the cubic map and $\beta_1=0.9$, starting from a deviation $10^{-3}c$ with forcing $\sigma$ (\cref{fig:delay}). The peak of $\Reff$ grows with $\rho$. For slow sweeps the deviation decays to the noise floor before the crossing, and the overshoot decreases logarithmically as the noise grows; for fast sweeps it is set by the initial deviation. The linear estimate, taken as the $R$ at which the propagated amplitude reaches half the orbit amplitude, follows the simulation within \DelayPredErrSlow\% for $\rho<0.01$ and overestimates the peak by up to \DelayPredErr\% at the fastest rates.

\paragraph{Sweeps in the networks of \cref{sec:networks}.} For every run at $\mu=0.5$ with plain momentum or a convex map at the absolute scale, we measure the sharpening rate at the crossing (a chord between the two diagnostics around $R=1$) and the largest $\Reff$ while the sharpness climbs (\cref{fig:delay}b). In ResNet-32 the overshoot grows with the sharpening rate (Spearman correlation \DelayResSpearman{} over \DelayResN{} runs, $p=\DelayResSpearmanP$): the \DelayResFastN{} runs that cross at $\rho\ge0.04$ per step peak at $\Reff=\DelayResFastLo$--\DelayResFastHi{}, the \DelayResSlowN{} slower ones at most at \DelayResSlowHi. In the MLP, plain momentum crosses at $\rho\lesssim\DelayMlpMomRhoHi$ and overshoots by at most a factor \DelayMlpMomPeakHi. The reduced model reproduces this trend but not its size: \DelayResAboveBand{} of the \DelayResN{} ResNet peaks lie above the range it gives over initial amplitudes from $10^{-6}$ to $1$. In the MLP a map is already engaged by the large descent momentum before the crossing: with the quartic map, $\Reff$ is \DelayMlpReffCross{} at the crossing (seed 0) and at most \DelayMlpPeakObs{} during the climb to $R=\DelayMlpRmax$.

\begin{figure}[t]
\centering
\includegraphics[width=\linewidth]{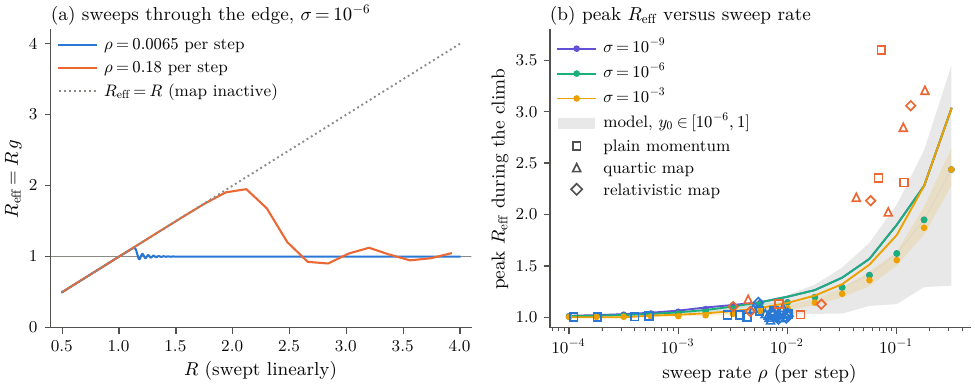}
\caption{Delayed loss of stability, reduced model with the cubic map, $\beta_1=0.9$. (a) $\Reff$ while $R$ is swept linearly through the edge at a slow and a fast rate; the dotted line is $\Reff=R$ (map inactive). (b) Peak $\Reff$ versus sweep rate: simulation (dots, 64 noise realizations; band: 10--90\%) and the linear estimate (lines) for three noise levels; grey band: range over initial amplitudes $10^{-6}$--$1$ and the cubic and quartic maps; open markers: largest $\Reff$ during the climb of the sharpness in the networks ($\mu=0.5$, all seeds; orange ResNet-32, blue MLP).}
\label{fig:delay}
\end{figure}

\section{Networks: additional results}\label{app:networks}
\paragraph{Fashion-MNIST MLP.} With a tanh MLP (784--128--128--10, MSE, 2{,}000 images, one seed), plain momentum diverges for $\mu\ge2$. For $\mu\in\{2,4\}$ every map, the VR map included, trains, and every run ends with $R=\eta\lambda_{\max}/\aeos\in[0.99,1.01]$ (\cref{fig:net}): the network's sharpness self-stabilizes down to the plain edge, where the multiplier returns to one. At $\mu=8$ with a common $c=0.05$ only the VR map reached a low loss; with $c$ chosen per map (\cref{fig:net}c) all maps reach losses between \NetVREight{} and \NetQuartEight{}, the quartic map needing the smallest $c$ (\NetQuartEightC{}). The oscillation amplitude scales with $c$, so in networks $c$ trades the size of the edge oscillation against the large-start risk of \cref{sec:global}.

\begin{figure}[t]
\centering
\includegraphics[width=\linewidth]{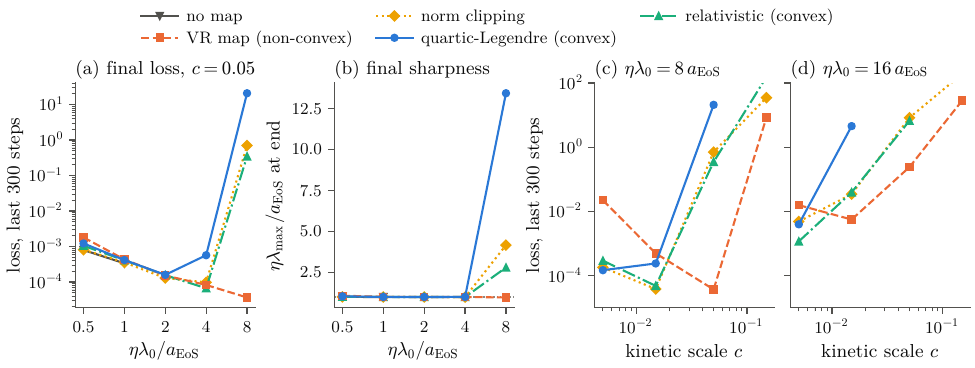}
\caption{Fashion-MNIST MLP: full-batch momentum on a tanh MLP with per-tensor maps (one seed). (a) Final loss and (b) final sharpness versus step size; (c,d) final loss versus $c$ at $8\times$ and $16\times$ the initial edge step.}
\label{fig:net}
\end{figure}

\paragraph{MLP on CIFAR-10: details.} To summarize the transient we take, for every completed run, the median of $\Reff$ over the Lanczos diagnostics with $R\ge1.1$ after step 100; these two thresholds were chosen after a first look at the data. For thresholds $R\ge1.05$, $1.1$, $1.2$ or $1.5$ and steps $\ge50$, $100$ or $200$, the median over runs of this per-run median stays between \ESensCvxLo{} and \ESensCvxHi{} for the convex maps (\ESensCvxNLo--\ESensCvxNHi{} runs contribute) and between \ESensVrLo{} and \ESensVrHi{} for the VR map. The Lanczos estimates converged at all diagnostics, and the two independent starts used for calibration agree to $<10^{-4}$. With the absolute scale no run with a kinetic map diverged (\EMlpGatedAbsDiv{}); \EMlpCvxWithinTen\% of the convex-map runs have a per-run median $\Reff$ within $\pm10\%$ of one. During the climb itself $\Reff$ lags behind ($0.80$ at $R=1.38$ in the run of \cref{fig:e1mlp}a), and pooled over individual diagnostics rather than runs, $\EMlpCvxAbsPtsWithinTen\%$ (absolute scale) and $\EMlpCvxRelPtsWithinTen\%$ (relative scale) of the points lie within $\pm10\%$ of one. The VR map's lowest values come from the relative scale; with the absolute scale and $\mu\le1$ it ends at $R=\EMlpVRSmallRlast$ with a final loss \EMlpVRSmallLossRatio{} times that of plain momentum. The differences between convex maps predicted by the reduced model (relativistic versus quartic) do not appear in the network. For $\mu\ge1$, all completed runs with the quartic or relativistic map and the absolute scale end within $1\%$ of $R=1$ (\EMlpQuarticAbsEndWithin{} and \EMlpRelativisticAbsEndWithin). With the relative scale, \EMlpDivHighQuarticRel{} quartic and \EMlpDivHighRelativisticRel{} relativistic runs with $\mu\ge4$ diverged, and the VR map does not diverge but ends at losses of \EMlpVRRelTopMin--\EMlpVRRelTopMax{} at $\mu=8$, against an initial \EMlpLossZero; here the quartic map fails more often than the relativistic one, the opposite of the reduced model's ranking. Where plain momentum is stable ($\mu\le1$), the maps with the absolute scale only slow training (quartic: \EMlpQSmallLossRatio{} times the final loss of plain momentum).

The cubic map holds the effective step at the edge most tightly (per-run median $\Reff=\EMlpCubicAbsReffMed$, IQR \EMlpCubicAbsReffLo--\EMlpCubicAbsReffHi{}, absolute scale), but it trains worse than the quartic and relativistic maps (\cref{tab:e1mlp}) and, unlike them, does not return to the plain edge: for $\mu\ge1$ it ends at $R=\EMlpCubicAbsEndLo$--\EMlpCubicAbsEndHi{} (within $1\%$ of one: cubic \EMlpCubicAbsEndWithin, quartic \EMlpQuarticAbsEndWithin, relativistic \EMlpRelativisticAbsEndWithin), and at $\mu=8$ its median final loss is \EMlpCubicAbsTopLoss{} against \EMlpQuarticAbsTopLoss{} (quartic) and \EMlpRelativisticAbsTopLoss{} (relativistic). At $\mu=8$ this is consistent with the faster growth of the cubic orbit amplitude (\cref{sec:global}); at $\mu\le4$, where the cubic runs end at $R\approx1.02$--$1.12$, the closed forms give the cubic map the \emph{smallest} amplitude, so amplitude does not explain the gap there. A plausible cause is the shape of the multiplier near zero speed: the cubic map's falls off linearly, $g\approx1-z$, because $|v|^3$ is not analytic at $0$, whereas the quartic, relativistic and VR multipliers fall off quadratically, so any residual momentum keeps the cubic multiplier below one and the sharpness above the plain edge. With the relative scale the cubic map diverged in \EMlpDivHighCubicRel{} runs with $\mu\ge4$.

\begin{table}[t]
\centering\small
\caption{MLP on CIFAR-10: median final training loss over 10 seeds (number of diverged runs in brackets) and per-run median of $\Reff$ over the diagnostics with $R\ge1.1$ after step 100 (median and IQR over runs; number of runs with such points).}
\label{tab:e1mlp}
\resizebox{\linewidth}{!}{\begin{tabular}{lcccccc}
\toprule
& \multicolumn{5}{c}{median final loss (diverged runs)} & $\Reff$ above the edge\\
method & $\mu=0.5$ & $\mu=1$ & $\mu=2$ & $\mu=4$ & $\mu=8$ & median [IQR] (runs)\\
\midrule
momentum (no map) & $1.3{\times}10^{-5}$ & $5.8{\times}10^{-6}$ & $1.6{\times}10^{-6}$ (1) & -- (10) & -- (10) & -- (0)\\
quartic-Legendre, abs. $c$ & $2.4{\times}10^{-4}$ & $2.6{\times}10^{-5}$ & $1.0{\times}10^{-6}$ & $4.7{\times}10^{-9}$ & $3.1{\times}10^{-6}$ & 1.01 [1.00, 1.04] (50)\\
cubic-Legendre, abs. $c$ & $2.2{\times}10^{-4}$ & $1.4{\times}10^{-4}$ & $4.8{\times}10^{-5}$ & $1.8{\times}10^{-5}$ & 0.20 & 1.00 [1.00, 1.01] (50)\\
relativistic, abs. $c$ & $2.3{\times}10^{-4}$ & $1.3{\times}10^{-5}$ & $8.0{\times}10^{-7}$ & $1.5{\times}10^{-10}$ & $1.9{\times}10^{-7}$ & 1.01 [1.00, 1.03] (50)\\
VR map, abs. $c$ & $2.6{\times}10^{-3}$ & $4.9{\times}10^{-3}$ & $3.0{\times}10^{-4}$ & $3.4{\times}10^{-7}$ & $6.3{\times}10^{-7}$ & 0.80 [0.61, 0.91] (50)\\
quartic-Legendre, rel. $c$ & $1.6{\times}10^{-5}$ & $6.6{\times}10^{-6}$ & $2.1{\times}10^{-6}$ & 2.31 (2) & -- (10) & 1.01 [1.00, 1.01] (9)\\
cubic-Legendre, rel. $c$ & $1.7{\times}10^{-5}$ & $7.3{\times}10^{-6}$ & $6.7{\times}10^{-6}$ & 1.75 (9) & -- (10) & 1.10 [1.07, 1.11] (4)\\
relativistic, rel. $c$ & $1.3{\times}10^{-5}$ & $5.7{\times}10^{-6}$ & $3.1{\times}10^{-6}$ & 2.11 & 14.91 (6) & 1.01 [1.00, 1.02] (15)\\
VR map, rel. $c$ & $1.4{\times}10^{-5}$ & $6.7{\times}10^{-6}$ & $4.7{\times}10^{-6}$ & 0.22 & 14.21 & 0.17 [0.12, 0.30] (20)\\
\bottomrule
\end{tabular}
}
\end{table}

\paragraph{ResNet-32: details.} Plain momentum at $\mu\ge2$ that did not diverge catapulted: its sharpness fell to \EResMomCollRminLo--\EResMomCollRminHi{} of the edge, towards a nearly constant predictor (\EResMomHighCollapsed{} runs). All runs with a map at the absolute scale trained with final loss at most \EResAbsHighMaxLoss; at the relative scale \EResRelHighCollapsed{} runs collapsed, \EResRelHighDiv{} diverged and one ended at $0.016$. At $\mu=0.5$ every run trained. Over the few diagnostics above the plain edge, the per-run median of $\Reff$ is \EResCvxMed{} (IQR \EResCvxLo--\EResCvxHi; \EResCvxN{} runs). Neither the convexity nor the elasticity of the map plays a visible role in preventing the collapse (at $\mu=0.5$ the VR and relativistic maps at the absolute scale do end at higher losses, \cref{fig:e1res}c).

\paragraph{Minibatch curvature.} In an earlier minibatch ResNet-32 study (batch 1{,}000, Adam), the preconditioned sharpness sat at $R\approx0.2$--$0.3$ for every learning rate between $10^{-3}$ and $10^{-2}$, i.e.\ it scaled as $1/\eta$ but stayed far below the full-batch threshold. Two explanations are possible: an estimator biased low by the large negative spectral tail of batch-normalized networks (at initialization $\lambda_{\min}\approx-57\lambda_{\max}$, which slows shifted power iteration), or the edge of \emph{stochastic} stability, where a batch sharpness rather than $\lambda_{\max}$ sits at the threshold \citep{andreyev2024eoss}. We repeated the setting (\EFourRuns{} runs: two seeds, $\beta_2\in\{0.95,0.999\}$, four learning rates; \cref{tab:e4}) with Lanczos (full re-orthogonalization; converged when the a-posteriori error estimate $r^2/(\theta_1-\theta_2)$ for the top Ritz value $\theta_1$, with residual $r$ and Ritz gap $\theta_1-\theta_2$, falls below $2\times10^{-3}\theta_1$ and $\theta_1$ has changed by less than $10^{-3}\theta_1$ over the last eight iterations, or when $r<10^{-2}\theta_1$; this criterion was met at every diagnostic, but the Ritz gap only estimates the true spectral gap), the earlier power iteration and the preconditioned batch sharpness on the same batches. Late in training, the preconditioned sharpness of the training batch is $R=\EFourRLo$--\EFourRHi{} by Lanczos, and the earlier estimate is \EFourLegLo--\EFourLegHi{} of it: the low values are real, not an estimator artefact (the negative tail has shrunk to $\lambda_{\min}/\lambda_{\max}=\EFourLamLo$ to \EFourLamHi). The preconditioned batch sharpness, $\lambda_{\rm bs}=\sum_k (P^{-1}\nabla_k)^\top H_k(P^{-1}\nabla_k)\big/\sum_k \nabla_k^\top P^{-1}\nabla_k$ over $K=4$ fresh augmented batches $k$ with gradients $\nabla_k$ and Hessians $H_k$ and the current preconditioner $P=\mathrm{diag}(\sqrt{\hat s}+\epsilon)$ (a ratio of sums), is $\eta\lambda_{\rm bs}=\EFourEtaBsLo$--\EFourEtaBsHi{} ($R_{\rm bs}=\EFourBsLo$--\EFourBsHi) across a tenfold range of learning rates, i.e.\ it scales as $1/\eta$, as at an edge, at a level well below the full-batch adaptive threshold $38/\eta$. This is consistent with dynamics limited by stochastic stability \citep{andreyev2024eoss}, but these experiments do not establish the corresponding stability boundary for Adam. For SGD with heavy-ball momentum, \citet{andreyev2026momentum} find batch-sharpness plateaus below the full-batch threshold; Adam's preconditioned version, which their analysis does not cover, plateaus here at $\eta\lambda_{\rm bs}\approx3$--$4$.

\begin{table}[t]
\centering\small
\caption{Minibatch Adam on ResNet-32 (batch 1{,}000), medians over two seeds of the median over steps $\ge900$. $R$: $\eta\lambda_{\max}/38$ of the preconditioned Hessian of the training batch; legacy/Lanczos: ratio of the earlier power-iteration estimate to the Lanczos estimate on the same batch; $R_{\rm bs}$: preconditioned batch sharpness over $38/\eta$.}
\label{tab:e4}
\begin{tabular}{llccccc}
\toprule
$\beta_2$ & $\eta$ & $R$ (Lanczos) & legacy / Lanczos & $R_{\rm bs}$ & $\eta\lambda_{\rm bs}$ & $\lambda_{\min}/\lambda_{\max}$\\
\midrule
0.95 & 0.001 & 0.20 & 0.96 & 0.086 & 3.3 & -0.08\\
0.95 & 0.002 & 0.24 & 0.92 & 0.106 & 4.0 & -0.07\\
0.95 & 0.005 & 0.23 & 0.91 & 0.106 & 4.0 & -0.10\\
0.95 & 0.01 & 0.27 & 0.96 & 0.096 & 3.6 & -0.10\\
0.999 & 0.001 & 0.17 & 0.94 & 0.076 & 2.9 & -0.09\\
0.999 & 0.002 & 0.20 & 0.93 & 0.088 & 3.3 & -0.08\\
0.999 & 0.005 & 0.22 & 0.97 & 0.101 & 3.8 & -0.08\\
0.999 & 0.01 & 0.24 & 0.97 & 0.095 & 3.6 & -0.09\\
\bottomrule
\end{tabular}

\end{table}

\section{Optimizer benchmarks: additional results}\label{app:practice}
\paragraph{Equal-budget comparisons.} These small models were trained on CPU. Every method receives 10 random-search trials over its own hyperparameters, selected by validation loss, and then three fresh seeds on held-out data (\cref{tab:tuned}; search spaces in \cref{app:search}). For the CNN, validation uses 5{,}000 of the 10{,}000 training images held out from training, and test the first 5{,}000 test images; for the character-level model, both losses are computed on 64 random windows of 64 characters from the same held-out last $10\%$ of the text, so validation and test are not disjoint there. On a Fashion-MNIST CNN, every Adam variant with a fixed scale, VRAdam and the preconditioned controller reach test losses of $0.266$--$0.280$, against AdamW's $0.275\pm0.004$, while the EMA-scale variant and the directional controller are clearly worse; Muon and Muon with a quartic map are best ($0.261$, $0.259$). On a character-level transformer the same Adam variants lie within $1.70$--$1.74$, the EMA-scale variant fails ($2.06$), and Muon ($1.59\pm0.01$) is clearly better than everything else. On an MLP, VRAdam is the only method that still learns at $\eta=0.1$, about $30\times$ AdamW's best learning rate (validation loss $0.91$--$1.01$ over two seeds, against $2.30$ at chance for all others; \cref{fig:lrsens}).

\begin{table}[t]
\centering\small
\caption{Equal-budget comparison: mean $\pm$ s.d.\ over 3 fresh seeds of the configuration selected from 10 trials.}
\label{tab:tuned}
\begin{tabular}{cc}
Fashion-MNIST CNN & char-level transformer\\[2pt]
\resizebox{0.53\linewidth}{!}{\begin{tabular}{lccc}
\toprule
Method & test loss & test acc. & diverged trials \\
\midrule
AdamW & 0.275 $\pm$ 0.004 & 90.15 $\pm$ 0.46 & 0/10 \\
AdamW + clip & 0.280 $\pm$ 0.006 & 90.13 $\pm$ 0.07 & 0/10 \\
VRAdam & 0.269 $\pm$ 0.001 & 90.40 $\pm$ 0.26 & 0/10 \\
K-Adam quartic & 0.270 $\pm$ 0.004 & 90.17 $\pm$ 0.53 & 0/10 \\
K-Adam quartic (EMA $c$) & 0.327 $\pm$ 0.001 & 88.59 $\pm$ 0.08 & 0/10 \\
K-Adam relativistic & 0.267 $\pm$ 0.003 & 90.53 $\pm$ 0.39 & 0/10 \\
K-Adam VR map (abl.) & 0.268 $\pm$ 0.003 & 90.38 $\pm$ 0.07 & 0/10 \\
K-Adam quartic, global (abl.) & 0.266 $\pm$ 0.008 & 90.57 $\pm$ 0.42 & 0/10 \\
EoS controller (precond.) & 0.266 $\pm$ 0.010 & 90.67 $\pm$ 0.49 & 0/10 \\
EoS controller (directional) & 0.315 $\pm$ 0.007 & 88.83 $\pm$ 0.14 & 0/10 \\
Muon & 0.261 $\pm$ 0.007 & 90.95 $\pm$ 0.46 & 0/10 \\
Muon + quartic map & \textbf{0.259 $\pm$ 0.008} & 91.17 $\pm$ 0.54 & 0/10 \\
\bottomrule
\end{tabular}
} &
\resizebox{0.41\linewidth}{!}{\begin{tabular}{lcc}
\toprule
Method & test loss & diverged trials \\
\midrule
AdamW & 1.728 $\pm$ 0.024 & 0/10 \\
AdamW + clip & 1.703 $\pm$ 0.026 & 0/10 \\
VRAdam & 1.708 $\pm$ 0.028 & 0/10 \\
K-Adam quartic & 1.739 $\pm$ 0.027 & 0/10 \\
K-Adam quartic (EMA $c$) & 2.061 $\pm$ 0.001 & 0/10 \\
K-Adam relativistic & 1.719 $\pm$ 0.022 & 0/10 \\
K-Adam VR map (abl.) & 1.722 $\pm$ 0.010 & 0/10 \\
K-Adam quartic, global (abl.) & 1.697 $\pm$ 0.032 & 0/10 \\
EoS controller (precond.) & 1.704 $\pm$ 0.034 & 0/10 \\
EoS controller (directional) & 1.725 $\pm$ 0.039 & 0/10 \\
Muon & \textbf{1.592 $\pm$ 0.011} & 0/10 \\
Muon + quartic map & 1.605 $\pm$ 0.013 & 0/10 \\
\bottomrule
\end{tabular}
}
\end{tabular}
\end{table}

\begin{figure}[t]
\centering
\includegraphics[width=0.75\linewidth]{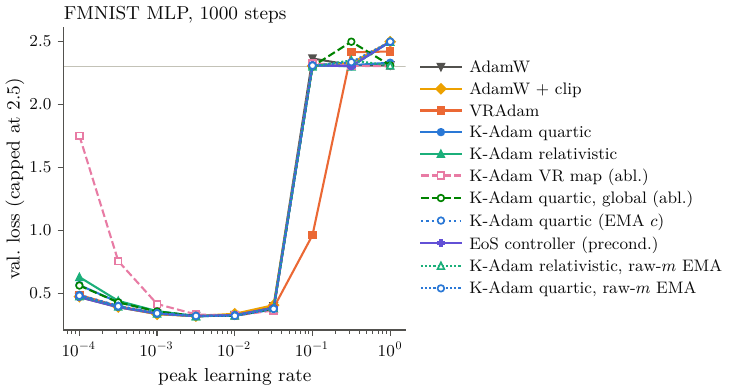}
\caption{Validation loss versus peak learning rate (Fashion-MNIST MLP), with the map hyperparameters fixed in advance.}
\label{fig:lrsens}
\end{figure}

\paragraph{ResNet-18 on CIFAR-10: tuned comparison.} Twelve random-search trials per method (30 epochs, selection by validation cross-entropy on a held-out split of 5k images), then five fresh seeds for 50 epochs evaluated on the test set (batch 512, bf16, cosine schedule, weight decay $5\times10^{-4}$, which the Muon variants apply to all tensors and the Adam variants only to tensors with at least two dimensions; \cref{tab:r18}). Kinetic Muon has the highest mean accuracy and a lower cross-entropy than Muon (paired over seeds: $-0.011\pm0.006$ CE, $+0.17\pm0.11$ pp; the reruns with the same seeds in \cref{tab:e3} give $\EThKinVsMuonCE$ and $\EThKinVsMuonAcc$ pp, a measure of the variation from non-deterministic GPU arithmetic); the quartic Adam variants raise accuracy but also cross-entropy relative to AdamW. The selected configurations differ in effective learning rate: Kinetic Muon's EMA map settles at a multiplier of $0.66$ in these runs, which places it at $1.8\times$ Muon's learning rate for the Muon-updated parameters and $0.08\times$ for the Adam-updated ones. These differences motivated the controls of \cref{tab:e3}, in which the multiplier measured was \EThKinGate{}, slightly above the $0.66$ used to set the controls, so the controls match the effective learning rates only approximately; there, both the quartic map and AdamW at three times its tuned learning rate raise accuracy by about half a point and cross-entropy by $0.02$--$0.03$ relative to tuned AdamW, which was selected for cross-entropy. The two curvature controllers estimate the sharpness with Hessian-vector products; in this study those extra forward passes ran in training mode and therefore also updated the batch-normalization running statistics used at test time. Their test numbers may be affected in either direction, and we report them for completeness only.

\begin{table}[t]
\centering\small
\caption{ResNet-18 / CIFAR-10, five fresh seeds (mean $\pm$ s.d.). $^\dagger$The curvature probes of these two methods also updated the batch-normalization running statistics used at test time (see text); reported for completeness only.}
\label{tab:r18}
\begin{tabular}{lcc}
\toprule
method & test CE & test accuracy (\%)\\
\midrule
AdamW & 0.334 $\pm$ 0.012 & 92.44 $\pm$ 0.14\\
AdamW + clip & 0.334 $\pm$ 0.004 & 92.66 $\pm$ 0.08\\
VRAdam & 0.343 $\pm$ 0.005 & 92.90 $\pm$ 0.23\\
K-Adam quartic & 0.344 $\pm$ 0.009 & 92.78 $\pm$ 0.22\\
K-Adam quartic, global & 0.367 $\pm$ 0.008 & 92.85 $\pm$ 0.12\\
K-Adam relativistic & 0.340 $\pm$ 0.006 & 92.35 $\pm$ 0.11\\
K-Adam VR map (per tensor) & 0.359 $\pm$ 0.014 & 92.91 $\pm$ 0.24\\
K-Adam quartic, EMA scale & 0.347 $\pm$ 0.006 & 91.58 $\pm$ 0.22\\
EoS controller (precond.)$^\dagger$ & 0.335 $\pm$ 0.011 & 92.39 $\pm$ 0.15\\
EoS controller (directional)$^\dagger$ & 0.382 $\pm$ 0.013 & 90.48 $\pm$ 0.21\\
Muon & 0.353 $\pm$ 0.005 & 93.99 $\pm$ 0.15\\
Muon + quartic EMA map & 0.342 $\pm$ 0.006 & 94.16 $\pm$ 0.24\\
\bottomrule
\end{tabular}
\end{table}

\paragraph{Small GPT.} A \EFiveParams M-parameter GPT-2-style model (8 layers, width 512, context 1{,}024) trained for 6{,}000 steps of $64\times1{,}024$ tokens on FineWeb-Edu \citep{penedo2024fineweb}, with AdamW, AdamW with gradient clipping at $1.0$, VRAdam, and AdamW with the cubic map on $\hat m$ at two scales, over learning rates $10^{-3}$--$3\times10^{-2}$ and two seeds (\EFiveRuns{} runs, \EFiveTokens M tokens each, about \EFiveMinutes{} minutes per run on one A100; \cref{tab:e5,fig:gpt}). The data are the first file of the sample-10BT subset, tokenized with the GPT-2 tokenizer; its first $10$M tokens (whole documents) are held out for validation and the following tokens are used for training. For the maps, $c_B=k\,\mathrm{rms}(\hat m_B)$ is frozen at the end of the 300-step warm-up. The data order is the same in every run, and the two seeds change only the initialization, which is shared across methods. Every method is best at $\eta=0.003$, where the mean final validation losses span only \EFiveBestSpread{} nats (AdamW \EFiveAdamwBest, clipping \EFiveClipBest, VRAdam \EFiveVrBest, cubic map \EFiveKinFourBest{} with $k=4$ and \EFiveKinOneBest{} with $k=1$), against a spread over initialization seeds of \EFiveAdamwSeedSpread{} for AdamW. The maps act as learning-rate reductions there (mean multiplier \EFiveKinOneGateThree{} for $k=1$ and \EFiveKinFourGateThree{} for $k=4$; VRAdam's factor is \EFiveVrGateThree). At $\eta=0.01$ AdamW fails (\EFiveAdamwTen), clipping largely avoids the failure (\EFiveClipTen, against its best of \EFiveClipBest), the cubic map with $k=1$ partly does, with a large seed dependence (\EFiveKinOneTen), and VRAdam does not (\EFiveVrTen), although its multiplier falls to \EFiveVrGateTen; at $\eta=0.03$ every method fails.

\begin{table}[t]
\centering\small
\caption{Small GPT (\EFiveParams M parameters, \EFiveTokens M tokens of FineWeb-Edu): final validation loss of both seeds (sorted) at each peak learning rate.}
\label{tab:e5}
\begin{tabular}{lcccc}
\toprule
& \multicolumn{4}{c}{final validation loss (two seeds) at peak learning rate}\\
method & $0.001$ & $0.003$ & $0.01$ & $0.03$\\
\midrule
AdamW & 3.77 / 3.78 & 3.70 / 3.73 & 6.47 / 6.69 & 6.68 / 6.71\\
AdamW + clip 1.0 & 3.78 / 3.78 & 3.71 / 3.72 & 3.79 / 4.07 & 6.74 / 6.82\\
VRAdam & 3.78 / 3.78 & 3.70 / 3.71 & 5.99 / 6.09 & 6.57 / 6.59\\
cubic map on $\hat m$, $k=1$ & 3.87 / 3.88 & 3.73 / 3.74 & 3.96 / 4.84 & 5.24 / 5.86\\
cubic map on $\hat m$, $k=4$ & 3.81 / 3.81 & 3.70 / 3.71 & 5.05 / 5.71 & 6.13 / 6.42\\
\bottomrule
\end{tabular}

\end{table}

\begin{figure}[t]
\centering
\includegraphics[width=\linewidth]{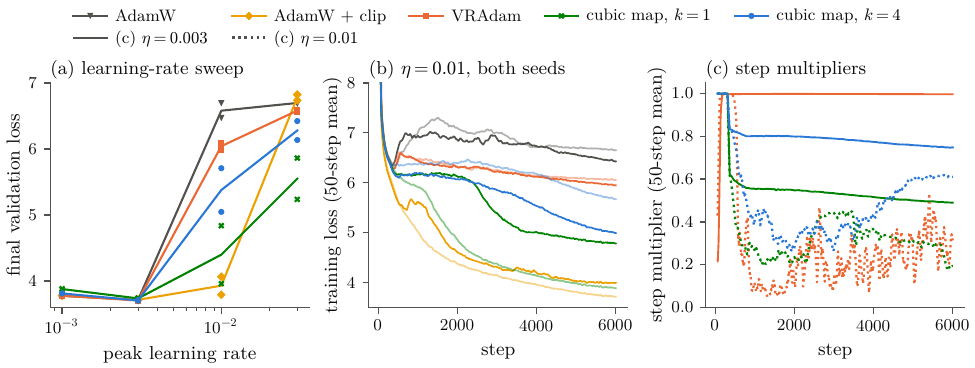}
\caption{Small GPT. (a) Final validation loss versus peak learning rate (lines: mean of two seeds). (b) Training loss at $\eta=0.01$. (c) Step multipliers of VRAdam and the cubic map at $\eta=0.003$ (solid) and $0.01$ (dotted), seed 0.}
\label{fig:gpt}
\end{figure}

\section{Grokking test}\label{app:grok}
The question is whether the loss spikes that accompany grokking with adaptive optimizers \citep{thilak2022slingshot} are needed for it. We trained the one-layer transformer of \citet{nanda2023progress} on modular addition ($p=113$, $30\%$ of pairs, full batch, 50{,}000 steps) with AdamW (weight decay 1) and with Adam without weight decay, using the five optimizers of the GPT study and three seeds each (\cref{tab:e6,fig:grok}). With weight decay all \ESixWdOneN{} runs grokked, after \ESixGrokLo k--\ESixGrokHi k steps. Clipping and the cubic map with $k=1$ lowered the median of the largest training loss after memorization about fourfold (\ESixClipSpikeOne{} and \ESixKinOneSpikeOne{} against \ESixAdamwSpikeOne{} for AdamW; not on every seed) without a significant delay: the median steps to grok, \ESixClipGrok k and \ESixKinOneGrok k against \ESixAdamwGrok k, lie within AdamW's own seed range, and across runs the spike size shows no significant correlation with the time to grok (Spearman \ESixSpearman, $p=\ESixSpearmanP$). This spike measure, chosen after looking at the curves, extends to the end of training and so includes the time after grokking; restricted to the window before grokking, the correlation is positive (Spearman \ESixSpearmanPre, $p=\ESixSpearmanPreP$, \ESixPreN{} runs): larger spikes went with later, not earlier, grokking, although part of this is mechanical, since a longer window tends to contain a larger maximum. The cubic map with $k=4$ grokked earlier than AdamW on each of the three seeds (by 1.4k--2.4k steps), a difference that three seeds cannot establish. VRAdam's damper reduced the spikes inconsistently. Without weight decay no run grokked (test accuracy at most \ESixWdZeroAccHi), and there the maps produced the largest spikes of the study on some seeds (\cref{tab:e6}). In this setting, reducing the measured loss peaks did not consistently delay grokking, and the maps are not a reliable way to remove the peaks.

\begin{table}[t]
\centering\small
\caption{Grokking test (modular addition, $p=113$): steps until test accuracy $\ge0.95$, final test accuracy (median), and the largest training loss from 1{,}000 steps after memorization to the end, for each seed.}
\label{tab:e6}
\resizebox{\linewidth}{!}{\begin{tabular}{llccc}
\toprule
regime & method & steps to grok (seeds 0, 1, 2) & final test acc. & largest loss after memorisation (seeds 0, 1, 2)\\
\midrule
AdamW, wd 1 & AdamW & 20.4k, 14.6k, 14.5k & 1.00 & 0.71, 0.21, 0.57\\
AdamW, wd 1 & AdamW + clip 1.0 & 21.9k, 14.2k, 18.4k & 1.00 & 0.14, 0.04, 0.15\\
AdamW, wd 1 & VRAdam & 19.9k, 12.7k, 15.5k & 1.00 & 0.45, 0.05, 0.61\\
AdamW, wd 1 & cubic map on $\hat m$, $k=1$ & 21.1k, 11.4k, 15.6k & 1.00 & 0.04, 0.23, 0.13\\
AdamW, wd 1 & cubic map on $\hat m$, $k=4$ & 18.0k, 12.6k, 13.1k & 1.00 & 0.37, 0.02, 0.33\\
\midrule
Adam, wd 0 & AdamW & --, --, -- & 0.26 & 2.62, 5.43, 4.16\\
Adam, wd 0 & AdamW + clip 1.0 & --, --, -- & 0.25 & 0.24, 1.62, 0.15\\
Adam, wd 0 & VRAdam & --, --, -- & 0.26 & 1.72, 22, 4.77\\
Adam, wd 0 & cubic map on $\hat m$, $k=1$ & --, --, -- & 0.25 & 0.10, 546, 0.12\\
Adam, wd 0 & cubic map on $\hat m$, $k=4$ & --, --, -- & 0.26 & 190, 6.47, 58\\
\bottomrule
\end{tabular}
}
\end{table}

\begin{figure}[t]
\centering
\includegraphics[width=\linewidth]{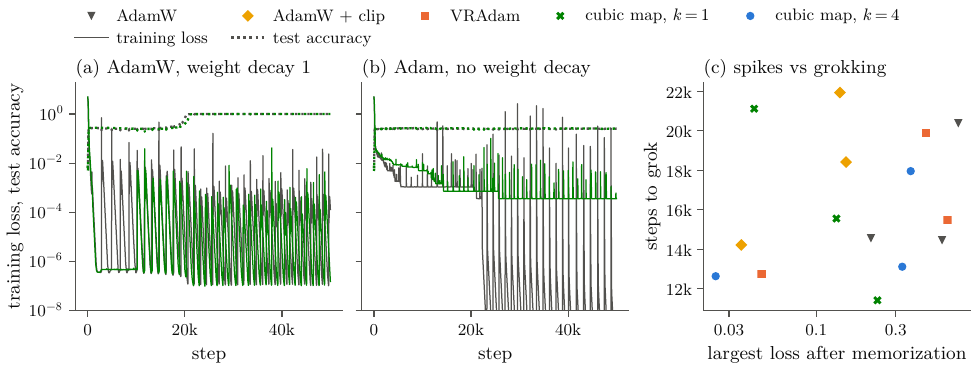}
\caption{Grokking test. (a,b) Training loss (solid) and test accuracy (dotted) of AdamW and the cubic map ($k=1$), seed 0, with and without weight decay. (c) Steps to grok versus the largest training loss after memorization, weight decay 1, all methods and seeds.}
\label{fig:grok}
\end{figure}

\FloatBarrier
\section{Notation, protocols and implementation details}\label{app:search}
\paragraph{Notation.} \Cref{tab:notation} collects the symbols used in more than one section and the conventions for the scale $c$ of the map, which differ between studies.
\begin{table}[H]
\centering\small
\caption{Notation.}\label{tab:notation}
\resizebox{\linewidth}{!}{\begin{tabular}{@{}ll@{}}
\toprule
Symbol & Meaning \\
\midrule
$\eta,\ \eta_t$ & learning rate \\
$\lambda,\ \lambda_{\max}$ & curvature (Hessian eigenvalue; for Adam, of $P^{-1/2}HP^{-1/2}$) and its largest value \\
$b=\beta_1$ & EMA momentum coefficient; $\kappa=(1-b)/(1+b)$, $\aeos=2/\kappa$ \\
$a=\eta\lambda,\ A=\kappa a$ & step size of the reduced model; $A=2$ at the edge \\
$R,\ \Reff$ & sharpness ratio $\eta\lambda_{\max}/\aeos$; effective ratio $R\,\bar g$ (\cref{eq:reffnet}) \\
$\mu$ & value of $R$ at initialization, $\eta=\mu\,\aeos/\lambda_0$ (\cref{sec:quadratic,sec:fullbatch}) \\
$q$ & quantity the map acts on: the momentum, or Adam's $\hat m/(\sqrt{\hat s}+\epsilon)$ \\
$\hat m,\ \hat s$ & Adam's bias-corrected first and second moments (VRAdam writes the first moment $v_t$) \\
$y,\ v$ & gradient and momentum of the reduced model \eqref{eq:reduced}, in units of $c$ \\
$\nabla_B,\ \omega$ & gradient of tensor $B$ and weight decay in \cref{alg:kadam} \\
$z,\ \phi,\ g$ & normalized momentum magnitude $\|q\|/c$; speed profile $\phi(z)=z\,g(z)$; step multiplier $g$ \\
$g_B,\ \bar g$ & multiplier of tensor $B$; mode-weighted multiplier $\sum_B g_B\|u_B\|^2$ \\
$\varepsilon,\ \varepsilon^\ast,\ p$ & elasticity $z\phi'/\phi$; its value on the orbit; its limit as $z\to\infty$ \\
$s,\ n$ & speed of the Legendre power map and its exponent, $s+s^n=z$, $p=1/n$ \\
$\nu$ & eigenvalues of the one-step Jacobian on the orbit \\
$\Lambda,\ \rho$ & local growth rate per step $\log r(J_0)$; sweep rate of $R$ per step \\
$x_t,\ d_t$ & deviation of two copies along the top eigenvector and its momentum component \\
$\sigma$ & noise level: forcing of the reduced state in \eqref{eq:noisyexit}; per-coordinate gradient noise in the network ramps \\
\midrule
\multicolumn{2}{@{}l@{}}{Argument and scale $c$ of the map, by study} \\
\cref{sec:quadratic} & Euclidean norm of the momentum; $c=0.05\,\|\nabla L(\theta_0)\|$ \\
\cref{sec:fullbatch}, absolute & Euclidean norm of the momentum of each tensor; $c_B=0.05$ \\
\cref{sec:fullbatch}, relative & as above, with $c_B=2\|\nabla_BL(\theta_0)\|$ \\
\cref{sec:adam}, map on $q$ & RMS of $q_B$; $c=0.5$ \\
\cref{sec:adam}, map on $\hat m$ & Euclidean norm of $\hat m_B$; $c=0.05$ \\
\cref{alg:kadam}, fixed & RMS of $q_B$ (or $\hat m_B$) divided by a constant $c$ \\
\cref{alg:kadam}, EMA & as above, with $c_B=c\,\big(\text{bias-corrected EMA of }\mathrm{rms}^2\big)^{1/2}$, decay $0.99$ \\
Kinetic Muon & RMS of the Muon update $\Delta_B$, with the EMA scale as above \\
GPT and grokking & RMS of $\hat m_B$; $c_B=k\,\mathrm{rms}(\hat m_B)$, frozen at the end of the warm-up \\
\bottomrule
\end{tabular}
}
\end{table}

\paragraph{Protocols.} \Cref{tab:protocol} lists, study by study, the data and splits, model and loss, optimizer settings, schedule, batch, training length, precision, the scale of the map, seeds and the failure criterion. Every entry was checked against the code, configuration and result files.
{\small\begin{longtable}{@{}p{0.17\linewidth}p{0.80\linewidth}@{}}
\caption{Study protocols.}\label{tab:protocol}\\
\toprule
\raggedright Study & \raggedright Protocol \tabularnewline
\midrule
\raggedright\emph{Conventions} &
\raggedright $\mu=\eta\lambda_0/a_{\mathrm{EoS}}$ with $a_{\mathrm{EoS}}=38$ ($\beta_1=0.9$); EMA momentum $m\leftarrow\beta_1m+(1-\beta_1)\nabla L$ without bias correction; $q=\hat m/(\sqrt{\hat s}+\epsilon)$; per tensor: one scale $c_B$ and one multiplier $g(r_B)$ per parameter tensor, $r_B=\text{argument}/c_B$; $\omega$: decoupled weight decay; $\ge$2-D: weight matrices, convolution kernels, embeddings; none: verified to be absent.
\tabularnewline\midrule
\raggedright Exact quadratic (\cref{sec:quadratic}) &
\raggedright \emph{Data:} first 2{,}000 Fashion-MNIST training images (standardized pixels plus a bias column), one-hot targets; no validation or test set. \emph{Model, loss:} linear least squares, $W\in\mathbb{R}^{785\times10}$ (7{,}850 parameters, one tensor), $L=\tfrac1{2N}\|XW-Y\|_F^2$, $\lambda_{\max}=300.4$. \emph{Optimizer:} full-batch EMA momentum, $\beta_1=0.9$, $\omega=0$; no map, quartic, cubic, relativistic, VR. \emph{LR:} constant, $\eta=\mu\cdot38/\lambda_{\max}$, 16 values $\mu\in[0.5,12]$. \emph{Batch:} full. \emph{Length:} 6{,}000 steps. \emph{Precision:} FP64 (NumPy, integrated in the eigenbasis of $C$). \emph{Scale:} $c=0.05\,\|\nabla L(W_0)\|=0.542$ (initial gradient norm of the seed-0 start, reused for seeds 1, 2); one global block; argument $\|m\|$. \emph{Seeds:} 0, 1, 2 (initialization $W_0\sim\mathcal N(0,0.01^2)$); 240 runs. \emph{Failure:} non-finite loss or excess loss $>10^{12}$.
\tabularnewline
\raggedright Controlled ramps (\cref{sec:delay,app:delay}) &
\raggedright \emph{Data:} 2{,}000 Fashion-MNIST training images (fixed random subset, permutation seed 0), one-hot targets; no validation or test set. \emph{Model, loss:} tanh MLP 784--128--128--10, $L=\tfrac1{2N}\sum_i\|f(x_i)-e_{y_i}\|^2$. \emph{Optimizer:} full-batch EMA momentum, $\beta_1=0.9$, $\omega=0$, no map; reference and perturbed copy take the same $\eta_t$. \emph{LR:} pretraining 1{,}500 steps at $R=0.5$ ($\eta$ reset from $\lambda_{\max}$ every 50 steps); ramp $\eta_t=(R_{\rm in}+\rho t)\,38/\hat\lambda_t$, power iteration warm-started from the end of pretraining (200 iterations), then 4 iterations every 5 steps. \emph{Grid:} kick $\delta=10^{-6}$, $\rho\in\{0.003,0.01,0.03,0.1\}$, $R_{\rm in}\in\{0.1,0.3,0.5,0.7,0.9,0.97\}$; $\delta=10^{-4}$ at $\rho=0.001$, $R_{\rm in}\in\{0.5,0.7,0.9,0.97\}$; noise only ($R_{\rm in}=0.5$, $\sigma\in\{10^{-9},10^{-7},10^{-5}\}$, five rates $\rho=0.001$ to $0.1$, exit at $|x_t|=10^{-4}$); kick $10^{-6}$ plus noise ($\rho=0.003$, $R_{\rm in}\le0.9$); 174 runs. \emph{Batch:} full. \emph{Length:} up to $R=2$ ($R=4$ for $\rho\ge0.05$), stopped 10 steps after $|x_t|$ exceeds $100\times$ the exit threshold. \emph{Precision:} FP64 (CPU). \emph{Scale:} no map. \emph{Seeds:} 0, 1, 2 (initialization; noise generator seed $1000+{}$seed). \emph{Failure:} none defined. \emph{Tracking test} (\cref{app:delay}): seed 2, $R_{\rm in}=0.7$, $\rho=0.001$, kick $10^{-4}$ or $10^{-6}$, sharpness re-estimated every fifth step or at every step; 3 runs.
\tabularnewline
\raggedright Full-batch CIFAR-10: MLP and ResNet-32 (E1; \cref{sec:networks,app:networks}) &
\raggedright \emph{Data:} balanced 1{,}000-image subset (first 100 per class of the 45k training split; split by permutation seed 0 of the 50k images), normalized with subset statistics; 2{,}000 images of the 5k validation split for diagnostics only; no test set. \emph{Model:} tanh MLP 3072--128--128--10 (411{,}146 parameters); ResNet-32 ($3\times5$ basic blocks, widths 16/32/64; 466{,}906 parameters), BN without running statistics; loss and gradients summed over 4 fixed blocks of 250 images (BN statistics per block). \emph{Loss:} MLP $\tfrac1{2N}\sum_i\|f(x_i)-e_{y_i}\|^2$; ResNet mean MSE, $\tfrac1{10N}\sum_i\|f(x_i)-e_{y_i}\|^2$. \emph{Optimizer:} full-batch EMA momentum, $\beta_1=0.9$, $\omega=0$; no map, quartic, cubic, relativistic, VR. \emph{LR:} constant, $\eta=\mu\cdot38/\lambda_0$, $\lambda_0$ by Lanczos at initialization for each seed (2 starts; MLP 72--93, ResNet 27--47); MLP $\mu\in\{0.5,1,2,4,8\}$, ResNet $\mu\in\{0.5,2,4\}$. \emph{Batch:} full. \emph{Length:} 3{,}000 steps. \emph{Precision:} FP32, TF32 off, deterministic cuDNN (A100). \emph{Scale:} per tensor, argument $\|m_B\|$; absolute $c=0.05$ or relative $c_B=2\|\nabla_BL(\theta_0)\|$. \emph{Seeds:} MLP 0--9, ResNet 0--4 (initialization; data fixed); 450 and 135 runs. \emph{Failure:} loss non-finite or $>10^3$, non-finite gradient, or parameter norm non-finite or $>10^6$.
\tabularnewline
\raggedright Full-batch Adam (E2; \cref{sec:adam}) &
\raggedright \emph{Data, model, loss, batch, precision, failure:} as for ResNet-32 in E1. \emph{Optimizer:} Adam, $\beta=(0.9,0.999)$, $\epsilon=10^{-7}$, bias-corrected, $\omega=0$; VRAdam ($\beta_3=1$, $\alpha_1=19$, global $\|m\|^2$ of the uncorrected first moment, same $\beta,\epsilon$); Adam with a per-tensor relativistic map. \emph{LR:} constant, $\eta\in\{0.002,0.01\}$. \emph{Length:} 3{,}000 steps. \emph{Scale:} map on $q$: argument $\mathrm{rms}(q_B)$, $c=0.5$; map on $\hat m$: argument $\|\hat m_B\|$ (Euclidean norm), absolute $c=0.05$; both multiply $q_B$. \emph{Seeds:} 0, 1, 2; 24 runs.
\tabularnewline
\raggedright Fashion-MNIST MLP (\cref{app:networks}) &
\raggedright \emph{Data, model, loss:} as for the controlled ramps. \emph{Optimizer:} full-batch EMA momentum, $\beta_1=0.9$, $\omega=0$; per-tensor none, VR, quartic, relativistic and clipping maps; global VR and quartic maps. \emph{LR:} constant, $\eta=\mu\cdot38/\lambda_0$, $\lambda_0=33.9$ (power iteration at initialization), $\mu\in\{0.5,1,2,4,8\}$; scale sweep at $\mu\in\{8,16\}$. \emph{Batch:} full. \emph{Length:} 3{,}000 steps. \emph{Precision:} FP32 (CPU). \emph{Scale:} absolute $c=0.05$, argument $\|m_B\|$ per tensor (or $\|m\|$ for the global maps); sweep $c\in\{0.005,0.015,0.05,0.15\}$ (per tensor; VR, relativistic, quartic, clipping). \emph{Seeds:} 0 only; $35+32$ runs. \emph{Failure:} loss non-finite or $>10^3$.
\tabularnewline
\raggedright Minibatch curvature (E4; \cref{app:networks}) &
\raggedright \emph{Data:} CIFAR-10 45k training split, per-channel normalization with 50k statistics, random crop (reflection padding 4) and flip; no evaluation set. \emph{Model, loss:} ResNet-32, BN without running statistics; mean MSE. \emph{Optimizer:} Adam, $\beta_1=0.9$, $\beta_2\in\{0.95,0.999\}$, $\epsilon=10^{-8}$, $\omega=0$, no map. \emph{LR:} constant, $\eta\in\{0.001,0.002,0.005,0.01\}$. \emph{Batch:} 1{,}000, reshuffled every epoch. \emph{Length:} 1{,}800 steps (40 epochs); curvature every 100 steps (Lanczos, at most 64 iterations; earlier power iteration, 5 warm-started iterations; batch sharpness over $K=4$ fresh batches). \emph{Precision:} training in bf16 autocast; probes in FP32, TF32 off, deterministic cuDNN (A100). \emph{Scale:} no map. \emph{Seeds:} 0, 1; 16 runs. \emph{Failure:} non-finite loss. \emph{Earlier study cited in the text:} the same data, model, loss, batch, constant learning rates and length, with six methods (AdamW, VRAdam, three per-tensor maps and the preconditioned controller), $\beta_2=0.95$, seed 0, and the sharpness every 50 steps by warm-started shifted power iteration; per-run summaries only.
\tabularnewline
\raggedright Equal-budget CNN and character-level transformer (\cref{sec:practice,app:practice}) &
\raggedright \emph{Data:} Fashion-MNIST: training = first 50k training images, validation = last 10k (first 5k evaluated), test = first 5k of the 10k test images; character task: Shakespeare text (\texttt{input.txt}, 1{,}115{,}394 characters, 65 symbols), first 90\% for training, validation and test each 64 random 64-character windows (generator seeds 123 and 456) from the last 10\%. \emph{Model:} CNN (two $3\times3$ convolutions, 32 and 64 channels, ReLU, max-pooling; FC 128; 10 outputs); pre-LN transformer (2 layers, $d=128$, 4 heads, context 64). \emph{Loss:} CE. \emph{Optimizer:} 12 methods (\cref{app:search}); Adam variants and VRAdam $\beta=(0.9,0.999)$, $\epsilon=10^{-8}$; Muon momentum 0.95 (Nesterov, 5 Newton--Schulz steps), with AdamW, $\beta=(0.9,0.95)$, for 1-D tensors and the transformer's embeddings and head; $\omega=0$ for all methods. \emph{LR:} linear warm-up 50 steps, cosine decay to $0.1\times$ peak. \emph{Batch:} 128 images or 32 sequences of 64 characters, drawn with replacement at every step. \emph{Length:} 800 steps. \emph{Precision:} FP32 (CPU). \emph{Scale:} K-Adam: argument $\mathrm{rms}(q_B)$ per tensor (or over all parameters for the global variant), fixed $c$; EMA variant $c_B=c\,(\text{bias-corrected EMA}_{0.99}\text{ of }\mathrm{rms}^2(q_B))^{1/2}$; Muon map in update space with the EMA scale. \emph{Seeds:} 10 random-search trials on seed 0, selected by validation loss, then seeds 1, 2, 3 on the test split (the seed sets initialization and batch order); 156 runs per task. \emph{Failure:} loss non-finite or $>10^4$.
\tabularnewline
\raggedright MLP learning-rate sensitivity (\cref{fig:lrsens}) &
\raggedright \emph{Data:} Fashion-MNIST as above; validation loss on 5k images. \emph{Model, loss:} ReLU MLP 784--256--256--10; CE. \emph{Optimizer:} as in the equal-budget row, 11 methods with hyperparameters fixed in advance (VRAdam $\beta_3=1$, $\alpha_1=19$; clipping at 1.0; preconditioned controller $\sigma=0.8$). \emph{LR:} 9 log-spaced peak values from $10^{-4}$ to 1; warm-up 50 steps, cosine decay to $0.1\times$ peak. \emph{Batch:} 128. \emph{Length:} 1{,}000 steps. \emph{Precision:} FP32 (CPU). \emph{Scale:} fixed $c=0.15$, argument $\mathrm{rms}(q_B)$ (per tensor or global); EMA variant on $q$, $c=1.5$; two EMA variants on $\hat m$, $c=2.0$, argument $\mathrm{rms}(\hat m_B)$. \emph{Seeds:} 0, 1; 198 runs. \emph{Failure:} loss non-finite or $>10^4$.
\tabularnewline
\raggedright ResNet-18, tuned comparison (\cref{app:practice}, \cref{tab:r18}) &
\raggedright \emph{Data:} CIFAR-10, 45k training / 5k validation (permutation seed 0) / 10k test images; per-channel normalization with 50k statistics; random crop (reflection padding 4) and flip. \emph{Model, loss:} CIFAR ResNet-18 ($3\times3$ stem, BN with running statistics); CE. \emph{Optimizer:} 12 methods (\cref{app:search}); $\beta=(0.9,0.95)$, $\epsilon=10^{-8}$; $\omega=5\times10^{-4}$ on $\ge$2-D tensors (Muon variants: on all tensors); Muon momentum 0.95 (Nesterov). \emph{LR:} linear warm-up 2 epochs, cosine decay to 0. \emph{Batch:} 512. \emph{Length:} search 30 epochs (2{,}610 steps), final runs 50 epochs (4{,}350 steps). \emph{Precision:} bf16 autocast, cuDNN benchmark; controller Hessian-vector products in FP32, in training mode. \emph{Scale:} as in the equal-budget row. \emph{Seeds:} 12 random-search trials on seed 0, selected by validation CE, then seeds 1--5 on the test set. \emph{Failure:} non-finite loss. \emph{Run records:} per-method means and standard deviations, selected hyperparameters and paired differences as reported from the cluster runs; per-seed records are not included.
\tabularnewline
\raggedright ResNet-18, effective-learning-rate controls (E3; \cref{sec:practice}, \cref{tab:e3}) &
\raggedright \emph{Protocol:} final runs of the row above (50 epochs, 4{,}350 steps), re-implemented with multiplier logging. \emph{Configurations} ($\eta$; $\eta_{\rm Adam}$; $c$): Muon, tuned (0.0113; $3.69\times10^{-3}$); Kinetic Muon, tuned (0.0313; $4.55\times10^{-4}$; EMA scale $c=0.92$); Muon at $0.66\times$ Kinetic Muon's rates (0.0207; $3.01\times10^{-4}$) and at its nominal rates; AdamW, tuned ($4.08\times10^{-4}$) and at $3\times$ ($1.22\times10^{-3}$); K-Adam quartic, global ($1.61\times10^{-3}$; fixed $c=0.527$, argument $\mathrm{rms}(q)$ over all parameters). \emph{Precision:} bf16 autocast, cuDNN benchmark (A100). \emph{Seeds:} 1--5, shared by all configurations (paired); 35 runs. \emph{Failure:} non-finite loss (none occurred).
\tabularnewline
\raggedright Small GPT (E5; \cref{sec:practice,app:practice}) &
\raggedright \emph{Data:} FineWeb-Edu \texttt{sample-10BT}, first file only (\texttt{sample/10BT/000\_00000.parquet}; 444{,}978 documents used); GPT-2 BPE (\texttt{tiktoken} \texttt{gpt2}, end-of-text token after each document; vocabulary 50{,}257, embedding padded to 50{,}304). \emph{Validation:} the first 10.0M tokens of that file (whole documents); the next 450.0M tokens form the training set, so no document is shared (no separate shard); final loss on 80 batches of $64\times1{,}024$ tokens (5.24M) at evenly spaced offsets, intermediate loss on 16 batches every 500 steps. \emph{Model:} GPT-2 style, 8 layers, $d=512$, 8 heads, context 1{,}024, tied embeddings, no biases in linear layers; 51{,}463{,}168 parameters. \emph{Loss:} next-token CE. \emph{Optimizer:} AdamW, $\beta=(0.9,0.95)$, $\epsilon=10^{-8}$, $\omega=0.1$ on $\ge$2-D tensors; the same with global-norm clipping at 1.0; VRAdam ($\beta_3=1$, $\alpha_1=19$); AdamW with the cubic map. \emph{LR:} peak $\eta\in\{10^{-3},3\times10^{-3},10^{-2},3\times10^{-2}\}$; linear warm-up 300 steps, cosine decay to $0.1\times$ peak. \emph{Batch:} 64 sequences of 1{,}024 tokens, i.e.\ 65{,}536 tokens per step (4 micro-batches of 16), random windows. \emph{Length:} 6{,}000 steps, 393.2M tokens processed. \emph{Precision:} bf16 autocast, TF32 enabled, \texttt{torch.compile}; one A100 (40\,GB). \emph{Scale:} per tensor, $c_B=k\,\mathrm{rms}(\hat m_B)$ frozen at the end of warm-up, $k\in\{1,4\}$; argument $\mathrm{rms}(\hat m_B)$, multiplier on $q_B$; no map during warm-up. \emph{Seeds:} 0, 1 (initialization only; the data order, seed 1234, is the same for all runs); 40 runs. \emph{Failure:} non-finite loss, or mean loss over 50 steps $>9$ after step 1{,}000 (not triggered).
\tabularnewline
\raggedright Grokking test (E6; \cref{app:grok}) &
\raggedright \emph{Data:} modular addition $a+b \bmod 113$, input $(a,b,{=})$; 30\% of the 12{,}769 pairs (3{,}830) for training, the other 8{,}939 for testing; split drawn per seed. \emph{Model:} one-layer transformer ($d=128$, 4 heads, ReLU MLP of width 512, no LayerNorm, learned positions, untied unembedding). \emph{Loss:} CE at the last position. \emph{Optimizer:} AdamW, $\beta=(0.9,0.98)$, $\epsilon=10^{-8}$, $\omega=1$ on $\ge$2-D tensors (regime wd1) or $\omega=0$ (wd0); the same with clipping at 1.0; VRAdam ($\beta_3=1$, $\alpha_1=19$); cubic map. \emph{LR:} $10^{-3}$, linear warm-up 10 steps, then constant. \emph{Batch:} full. \emph{Length:} 50{,}000 steps (logged every 20). \emph{Precision:} FP32, no autocast (A100). \emph{Scale:} as in E5, frozen after the 10-step warm-up, $k\in\{1,4\}$. \emph{Seeds:} 0, 1, 2 (initialization and split); 30 runs. \emph{Failure:} none defined (grokked: test accuracy $\ge0.95$).
\tabularnewline
\bottomrule

\end{longtable}}

\paragraph{Search spaces.}
\begin{center}\small
\resizebox{\linewidth}{!}{\begin{tabular}{lp{5.2cm}p{5.2cm}}
\toprule
Method & CNN, char-level transformer & ResNet-18 \\
\midrule
AdamW & $\eta$: log [0.0001, 0.3] & $\eta$: log [0.0001, 0.01] \\
AdamW + clip & $\eta$: log [0.0001, 0.3]; clip: log [0.1, 10] & $\eta$: log [0.0001, 0.01]; clip: log [0.1, 10] \\
VRAdam & $\eta$: log [0.0001, 0.3]; $\beta_3$: uni [0.1, 5]; $\alpha_1$: int [3, 30] & $\eta$: log [0.0001, 0.01]; $\beta_3$: log [0.001, 10]; $\alpha_1$: int [3, 30] \\
K-Adam quartic & $\eta$: log [0.0001, 0.3]; $c$: log [0.05, 2] & $\eta$: log [0.0001, 0.01]; $c$: log [0.05, 2] \\
K-Adam quartic (EMA $c$) & $\eta$: log [0.0001, 0.3]; $c$: log [0.5, 4] & $\eta$: log [0.0001, 0.01]; $c$: log [0.5, 4] \\
K-Adam relativistic & $\eta$: log [0.0001, 0.3]; $c$: log [0.05, 2] & $\eta$: log [0.0001, 0.01]; $c$: log [0.05, 2] \\
K-Adam VR map (abl.) & $\eta$: log [0.0001, 0.3]; $c$: log [0.05, 2] & $\eta$: log [0.0001, 0.01]; $c$: log [0.05, 2] \\
K-Adam quartic, global (abl.) & $\eta$: log [0.0001, 0.3]; $c$: log [0.05, 2] & $\eta$: log [0.0001, 0.01]; $c$: log [0.05, 2] \\
EoS controller (precond.) & $\eta$: log [0.0001, 0.3]; $\sigma$: uni [0.3, 2] & $\eta$: log [0.0001, 0.01]; $\sigma$: uni [0.3, 2] \\
EoS controller (directional) & $\eta$: log [0.0001, 0.3]; $\sigma$: uni [0.3, 2] & $\eta$: log [0.0001, 0.01]; $\sigma$: uni [0.3, 2] \\
Muon & $\eta$: log [0.001, 0.3]; $\eta_{\rm Adam}$: log [0.0001, 0.03] & $\eta$: log [0.003, 0.1]; $\eta_{\rm Adam}$: log [0.0001, 0.01] \\
Muon + quartic map & $\eta$: log [0.001, 0.3]; $\eta_{\rm Adam}$: log [0.0001, 0.03]; $c$: log [0.5, 4] & $\eta$: log [0.003, 0.1]; $\eta_{\rm Adam}$: log [0.0001, 0.01]; $c$: log [0.5, 4] \\
\bottomrule
\end{tabular}
}
\end{center}
Basin experiments (\cref{sec:global}): \BasinNZero{} initial conditions, 15{,}000 iterations (40{,}000 for $\beta_1\ge0.95$; runs with one tenth of the iterations give identical shares to two decimals); an orbit counts as reached when $|y_{t+1}+y_t|<10^{-6}(1+|y_t|)$ over the last four iterates. The Legendre speed is computed by Newton's method from the upper bound $\min(z,z^{1/n})$, which converges monotonically (residual $<2\times10^{-15}$ relative). Large starts: $|y_0|$ (in units of $c$) on a log grid of 201 points in $[1,10^5]$, $3\times10^4$ iterations ($8\times10^4$ for $\beta_1=0.99$). In the exact quadratic a run counts as on the orbit when the alternation index is at least $0.99$, $|\Reff-1|\le0.01$ and the amplitude is within $10^{-3}$ of the closed form. Exact quadratic: integration in the eigenbasis of $C$, which leaves the tensor norm and hence the map unchanged (checked against direct integration, relative difference $10^{-12}$). Delay study (\cref{sec:delay}): $R$ swept from $0.5$ to $4$, initial deviation $10^{-3}c$, 64 noise realizations per rate; peak $\Reff$ is the maximum of $R\,g$ along the sweep; the linear estimate is the $R$ at which the propagated amplitude reaches half the orbit amplitude. Sensitivity at the ResNet rate: sweep to $R=5$, initial deviations $10^{-6}$--$1$ (in units of $c$), cubic and quartic maps. Controlled ramps (\cref{fig:ramp}): float64 throughout; pretraining 1{,}500 steps at $R=0.5$ with $\eta$ reset from the top eigenvalue every 50 steps; power iteration warm-started from the previous eigenvector (200 iterations at the start of a ramp, then four every five steps); the two copies share the momentum at the kick; a run stops ten steps after $|x_t|$ first exceeds 100 times the exit threshold; at $\rho=0.001$ the kick is $\delta=10^{-4}$ so that the decayed deviation stays above the rounding level of the trajectory difference; noise is added to the gradient of the second copy only; the model distributions use 400 realizations on the measured path of each run. GPU studies: curvature computations in FP32 with TF32 disabled and deterministic cuDNN; training in FP32 for the full-batch studies and the grokking test, and with bf16 autocast for the minibatch ResNet-32, ResNet-18 and GPT studies (\cref{tab:protocol}); Lanczos with full re-orthogonalization, eigenvalue-error and stagnation stopping, two independent starts for calibration.

\end{document}